\apptocmd{\sloppy}{\hbadness 10000\relax}{}{} 
\def\@acmplainindent{0pt}
\def\@acmdefinitionindent{0pt}
\def\@proofindent{\noindent}
\crefname{algocf}{Algorithm}{Algorithms}
\crefname{examplecf}{Example}{Examples}
\crefname{section}{\S}{\S}
\Crefname{section}{Section}{Sections}
\crefname{subsection}{\S}{\S}
\Crefname{subsection}{Section}{Sections}
\crefname{equation}{}{}
\Crefname{equation}{Equation}{Equations}
\Crefname{lemma}{Lemma}{Lemmas}
\Crefname{theorem}{Theorem}{Theorems}
\Crefname{proposition}{Proposition}{Propositions}
\Crefname{corollary}{Corollary}{Corollaries}
\Crefname{algocf}{Algorithm}{Algorithms}
\Crefname{appsec}{Appendix}{Appendices}
\crefname{appsec}{\S}{\S}
\newcommand{\LTac}{$\text{L}_{\text{tac}}$}
\newcommand{\STAB}[1]{\begin{tabular}{@{}c@{}}#1\end{tabular}}
\newcommand{\angl}[1]{\langle#1\rangle}
\newcommand{\naturals}{\mathbb{N}}
\newcommand{\bits}{{\{0,1\}}}
\newcommand{\sfunc}{\mathsf{sz}}
\newcommand{\acc}{\mathsf{accept}}
\newcommand{\rej}{\mathsf{reject}}
\newcommand{\sem}[1]{\mathrel{\left\llbracket#1\right\rrbracket}}
\newcommand{\psem}[3]{\mathrel{\left\llbracket#1\right\rrbracket_{#2}^{#3}}}
\newcommand{\lsem}[2]{\sem{#1}^{#2}_{\mathcal{L}}}
\newcommand{\bsem}[3]{\ensuremath{{\left\llbracket#1\right\rrbracket}_{\mathcal{B}}^{#2}(#3)}}
\newcommand{\semexp}[1]{\ensuremath{\isem{#1}_{\mathcal{E}}}}
\newcommand{\semop}[1]{\ensuremath{\isem{#1}_{\mathcal{O}}}}
\newcommand{\sempat}[1]{\ensuremath{\isem{#1}_{\mathcal{P}}}}
\newcommand{\semtrans}[1]{\ensuremath{\isem{#1}_{\mathcal{T}}}}
\newcommand{\semaut}[1]{\isem{#1}_{\mathcal{A}}}
\newcommand{\opsize}[1]{|\!|#1|\!|}
\newcommand{\typ}[2]{\ensuremath{\vdash_{#1} #2}}
\newcommand{\typbare}[1]{\ensuremath{\vdash_{#1}}}
\newcommand{\typexp}[2]{\ensuremath{\typ{\mathcal{E}}{#1 : #2}}}
\newcommand{\typexpbare}{\ensuremath{\typbare{\mathcal{E}}}}
\newcommand{\typop}[1]{\ensuremath{\typ{\mathcal{O}}{#1}}}
\newcommand{\typopbare}{\ensuremath{\typbare{\mathcal{O}}}}
\newcommand{\typtrans}[1]{\ensuremath{\typ{\mathcal{T}}{#1}}}
\newcommand{\typtransbare}{\ensuremath{\typbare{\mathcal{T}}}}
\newcommand{\typaut}[1]{\ensuremath{\typ{\mathcal{A}}{#1}}}
\newcommand{\typautbare}{\ensuremath{\typbare{\mathcal{A}}}}
\newcommand{\Var}{\mathsf{Var}}
\newcommand{\WP}{\mathsf{WP}}
\newcommand{\floor}[1]{\lfloor#1\rfloor}
\newcommand{\reach}{\mathsf{reach}}
\newcommand{\Smany}[1]{\overline{#1}}
\newcommand{\Sst}{\ensuremath{\mathit{st}}\xspace}
\newcommand{\Sop}{\ensuremath{\mathit{op}}}
\newcommand{\Sseq}[2]{\ensuremath{#1;\,#2}}
\newcommand{\Sextract}[1]{\ensuremath{\operatorname{\mathtt{extract}}(#1)}}
\newcommand{\Sassign}[2]{\ensuremath{#1 \mathrel{:=} #2}}
\newcommand{\Sexp}{\ensuremath{e}}
\newcommand{\Shdr}{\ensuremath{h}}
\newcommand{\Sbits}{\ensuremath{\mathit{bv}}}
\newcommand{\Snat}{\ensuremath{n}}
\newcommand{\Sslice}[3]{\ensuremath{#1[#2\!:\!#3]}}
\newcommand{\Sconcat}[2]{\ensuremath{#1 \mathrel{++} #2}}
\newcommand{\Strans}{\ensuremath{\mathit{tz}}}
\newcommand{\Sgoto}[1]{\ensuremath{\operatorname{\mathtt{goto}}(#1)}}
\newcommand{\Ssel}[2]{\ensuremath{\operatorname{\mathtt{select}}(#1) \{#2\}}}
\newcommand{\Scase}{\ensuremath{\mathit{c}}}
\newcommand{\Spat}{\ensuremath{\mathit{pat}}}
\newcommand{\Sany}{\ensuremath{\_}}
\newcommand{\Sstate}[3]{#1 \, \{ #2; #3 \}}
\newcommand{\Saut}{\ensuremath{\mathit{aut}}}
\newcommand{\Sbexp}{\ensuremath{\mathit{be}}}
\newcommand{\Sbbuf}[1]{\ensuremath{\mathrm{buf}^{#1}}}
\newcommand{\Sbhdr}[1]{\ensuremath{h^{#1}}}
\newcommand{\Svar}{\ensuremath{x}}
\newcommand{\Spred}{\ensuremath{p}}
\newcommand{\Sstatepred}[2]{\ensuremath{{#1}^{#2}}}
\newcommand{\Sbuflen}[2]{\ensuremath{{#1}^{#2}}}
\renewcommand{\S}{\ensuremath{p}}
\newcommand{\syndef}[3]{#1&::=&#2&\text{#3} \\}
\newcommand{\syncase}[2]{&|&#1&\text{#2} \\}
\newcommand{\isem}[1]{\llbracket#1\rrbracket}
\newcommand{\SYSTEM}{Leap\-frog\xspace}
    \CatchFileDef{\headfull}{../.git/HEAD}{}
    \StrGobbleRight{\headfull}{1}[\head]
        \CatchFileDef{\commit}{../.git/refs/heads/\branch}{}
        \newcommand{\commit}{\headfull}
\definecolor{dkgreen}{rgb}{0,0.6,0}
\definecolor{dkblue}{rgb}{0,0,0.6}
\definecolor{ltblue}{rgb}{0,0.4,0.4}
\definecolor{dkviolet}{rgb}{0.3,0,0.5}
\definecolor{mygray}{RGB}{245,245,245}
\newcommand{\code}[1]{\texttt{#1}}
\lstdefinelanguage{Coq}{
    mathescape=true,
    texcl=false,
    morekeywords=[1]{Section, Module, End, Require, Import, Export,
        Variable, Variables, Parameter, Parameters, Axiom, Hypothesis,
        Hypotheses, Notation, Local, Tactic, Reserved, Scope, Open, Close,
        Bind, Delimit, Definition, Let, Ltac, Fixpoint, CoFixpoint, Add,
        Morphism, Relation, Implicit, Arguments, Unset, Contextual,
        Strict, Prenex, Implicits, Inductive, CoInductive, Record,
        Structure, Canonical, Coercion, Context, Class, Global, Instance,
        Program, Infix, Theorem, Lemma, Corollary, Proposition, Fact,
        Remark, Example, Proof, Goal, Save, Qed, Defined, Hint, Resolve,
        Rewrite, View, Search, Show, Print, Printing, All, Eval, Check,
        Projections, inside, outside, Def},
    morekeywords=[2]{forall, exists, exists2, fun, fix, cofix, struct,
        match, with, end, as, in, return, let, if, is, then, else, for, of,
        nosimpl, when},
    morekeywords=[3]{Type, Prop, Set, true, false, option},
    morekeywords=[4]{pose, set, move, case, elim, apply, clear, hnf,
        intro, intros, generalize, rename, pattern, after, destruct,
        induction, using, refine, inversion, injection, rewrite, congr,
        unlock, compute, ring, field, fourier, replace, fold, unfold,
        change, cutrewrite, simpl, have, suff, wlog, suffices, without,
        loss, nat_norm, assert, cut, trivial, revert, bool_congr, nat_congr,
        symmetry, transitivity, auto, split, left, right, autorewrite},
    morekeywords=[5]{by, done, exact, reflexivity, tauto, romega, omega,
        assumption, solve, contradiction, discriminate},
    morekeywords=[6]{do, last, first, try, idtac, repeat},
    morecomment=[s]{(*}{*)}, 
    showstringspaces=false,
    morestring=[b]", 
    morestring=[d],
    tabsize=3,
    extendedchars=false,
    sensitive=true,
    breaklines=false,
    basicstyle=\small,
    captionpos=b,
    columns=[l]flexible,
    identifierstyle={\ttfamily\color{black}},
    keywordstyle=[1]{\ttfamily\color{dkviolet}},
    keywordstyle=[2]{\ttfamily\color{dkgreen}},
    keywordstyle=[3]{\ttfamily\color{ltblue}},
    keywordstyle=[4]{\ttfamily\color{dkblue}},
    keywordstyle=[5]{\ttfamily\color{dkred}},
    stringstyle=\ttfamily,
    commentstyle={\ttfamily\color{dkgreen}},
    literate=
    {\\forall}{{\color{dkgreen}{$\forall\;$}}}1
    {\\exists}{{$\exists\;$}}1
    {<-}{{$\leftarrow\;$}}1
    {=>}{{$\Rightarrow\;$}}1
    {==}{{\code{==}\;}}1
    {==>}{{\code{==>}\;}}1
    {->}{{$\rightarrow\;$}}1
    {<->}{{$\leftrightarrow\;$}}1
    {<==}{{$\leq\;$}}1
    {\#}{{$^\star$}}1
    {\\o}{{$\circ\;$}}1
    {\@}{{$\cdot$}}1
    {\/\\}{{$\wedge\;$}}1 
    {\\\/}{{$\vee\;$}}1 
    {++}{{\code{++}\;}}1
    {~}{{\ }}1
    {\@\@}{{$@$}}1
    {\\mapsto}{{$\mapsto\;$}}1
    {\\hline}{{\rule{\linewidth}{0.5pt}}}1 
}[keywords,comments,strings]
\begin{document}

\title{\SYSTEM:\texorpdfstring{\@}{} Certified Equivalence for Protocol Parsers}


\author{Ryan Doenges}
\affiliation{
  \institution{Cornell University}
  \country{USA}
}
\email{rhd89@cornell.edu}

\author{Tobias Kapp\'{e} }
\affiliation{
  \institution{ILLC, University of Amsterdam}
  \country{Netherlands}
}
\email{t.kappe@uva.nl}

\author{John Sarracino}
\affiliation{
  \institution{Cornell University}
  \country{USA}
}
\email{jsarracino@cornell.edu}

\author{Nate Foster}
\affiliation{
  \institution{Cornell University}
  \country{USA}
}
\email{jnfoster@cs.cornell.edu}

\author{Greg Morrisett}
\affiliation{
  \institution{Cornell University}
  \country{USA}
}
\email{jgm19@cornell.edu}

\begin{abstract}
We present \SYSTEM, a Coq-based framework for verifying
equivalence of network protocol parsers. Our approach is based on an
automata model of P4 parsers, and an algorithm for symbolically computing
a compact representation of a bisimulation, using ``leaps.''
Proofs are powered by a certified compilation chain from first-order entailments
to low-level bitvector verification conditions,
which are discharged using off-the-shelf SMT solvers.
As a result, parser equivalence proofs in \SYSTEM are fully automatic and push-button.

We mechanically prove the core metatheory that underpins our approach,
including the key transformations and several optimizations.
We evaluate \SYSTEM on a range of practical
case studies, all of which require minimal configuration and no manual
proof. Our largest case study uses \SYSTEM to perform translation
validation for a third-party compiler from automata to hardware
pipelines. Overall, \SYSTEM represents a step towards a world where
all parsers for critical network infrastructure are verified. It
also suggests directions for follow-on efforts, such as verifying
relational properties involving security.

\end{abstract}

\begin{CCSXML}
  <ccs2012>
     <concept>
         <concept_id>10003752.10003766.10003773</concept_id>
         <concept_desc>Theory of computation~Automata extensions</concept_desc>
         <concept_significance>500</concept_significance>
         </concept>
     <concept>
         <concept_id>10011007.10010940.10010992.10010998.10010999</concept_id>
         <concept_desc>Software and its engineering~Software verification</concept_desc>
         <concept_significance>500</concept_significance>
         </concept>
   </ccs2012>
\end{CCSXML}

\ccsdesc[500]{Theory of computation~Automata extensions}
\ccsdesc[500]{Software and its engineering~Software verification}

\keywords{P4, network protocol parsers, Coq, automata, equivalence, foundational verification, certified parsers}

\maketitle

\section{Introduction}
Devices like routers, firewalls and network interface cards as well as
operating system kernels occupy a critical role in modern communications
infrastructure.
Each of these implements parsing for a cornucopia of networking
protocols in its \emph{protocol parser}.
The parser is the network's first line of defense, responsible for organizing
and filtering unstructured and often untrusted data as it arrives from the
outside world.
Due to their crucial role, bugs in parsers are a significant source of
crashes, vulnerabilities, and other faults~\cite{langsec}.

\paragraph{Example Router Bug.} Consider the following bug,
which was present in a commercial router developed by a leading
equipment vendor several years ago.
Internally, the router was organized around a high-throughput
pipeline, which most packets traversed in a single pass.
However some packets had to be \emph{recirculated}, meaning they took additional
passes through the pipeline before being sent back out on the wire.
The router used an internal state variable to decide whether a packet should be
recirculated.
Usually this state variable was initialized by vendor-supplied code.
But, as was discovered by a customer, it could also be erroneously
initialized from data in non-standard, malformed packets.
Hence, crafted packets could bypass the vendor-supplied initialization code,
resulting in an infinite recirculation loop---a denial-of-service (DoS) attack
on the router and its peers.
In the presence of broadcast traffic, such a ``packet storm'' would monopolize
the router's resources, rendering it unusable until it was rebooted.

An easy way to avoid this bug would be to modify the router's parser
to filter away malformed packets, while still accepting valid packets.
However, to have full confidence in the new parser, one would need to
prove that it is \emph{equivalent} to the original, modulo malformed
packets.
Although parsers tend to be simple, this would likely be a challenging
verification task---it requires reasoning about a \emph{relational} property
across two distinct programs.

\paragraph{Parser Equivalence Checking.}
This paper studies relational verification of protocol parsers,
focusing specifically on equivalence of parsers expressed in terms of
state machines.
Semantic equivalence~\cite{10.1145/3314221.3314596} is a fundamental problem that underpins a wide range of
practical verification tasks including
translation-validation~\cite{necula-tv},
superoptimization~\cite{massalin1987superoptimizer}, and program
synthesis~\cite{gulwani2017program}.
As we will see in \Cref{sec:case-studies}, the algorithm
that we develop for computing equivalence can also be straightforwardly
extended to other relational verification challenges, including one
inspired by the router bug above (c.f.\ our external filtering case
study in \Cref{subsec:utility-studies}).

There are several technical challenges related to mechanically and formally
proving protocol parser equivalence.
First, we need a computational model for parsers that is
expressive enough to handle practical parsers, but also sufficiently
restricted to enable tractable formal verification.
Second, we need efficient reasoning techniques based on symbolic
representations and domain-specific insights to handle the enormous
state space of real-world parsers.
Third, we need effective automation and tool support so programmers
can avoid manually crafting sprawling proofs of equivalence.

\paragraph{Certified Tooling.}
The foundational guarantees offered by proof assistants are highly
desirable in error-prone domains.
However, achieving these guarantees is notoriously hard, as proof
assistants need an experienced engineer's guidance to prove all but
the simplest goals.
One way to scale verification is to break systems into smaller
components that compose along shared specifications~\cite{deepspec}.
This allows individual verification tasks to be solved in isolation,
without compromising top-level guarantees.
Our hope is that push-button verifiers can reduce total proof burden by
certifying some components automatically.

Consider the task of proving that a realistic parser meets a
functional specification, in the style of VST~\cite{vstbook} proofs
which relate C programs to functional specifications.
The parser might be hand-optimized for performance reasons like the vectorized
parser in \Cref{fig:example-parser-syntax}, which makes it difficult to reason
about directly.
With an automated equivalence checker, we could justify replacing it
with the simpler and easier to verify reference implementation.
Other problems like translation validation~\cite{verified-lcm,crellvm}
or proof-producing synthesis~\cite{10.1145/3473589} would similarly
benefit from certified tooling.
%

%

\paragraph{Our Contribution: \SYSTEM.}
We present \SYSTEM,\footnote{\url{https://github.com/verified-network-toolchain/leapfrog}} a new framework that addresses these challenges. It provides
an expressive automata model for parsers, with syntax inspired by
P4~\cite{p4paper,p4-spec}, a networking DSL\@. The model captures common
programming idioms and offers a domain-specific interface for packet parsing.
We demonstrate its applicability by encoding parsers for real-world
protocols like IPv4 and MPLS\@.

To establish the equivalence of \SYSTEM parsers, we extend classical
techniques based on bisimulations to work with symbolic
representations of the state space. We also develop a novel up-to
technique based on ``leaps'' that dramatically reduces the cardinality of the
constructed relation.

We implement \SYSTEM as a Coq library. This allows us to mechanize our
metatheory and produce certificates.
Our algorithm, which runs inside the Coq prover, produces reusable Coq theorems of
parser equivalence.
At a technical level, our Coq development combines classical
techniques based on predicate transformers, domain-specific
optimizations, and a plugin to interface with SMT solvers, to
facilitate effective automation. We apply \SYSTEM to several
benchmarks and find that it is able to scale up to handle realistic
protocols.

The contributions of this paper are as follows:
\begin{itemize}
\item We develop a parser model based on automata extended with
  domain-specific features (\Cref{sec:automata}).
\item We design efficient algorithms for establishing the equivalence
  of parsers based on symbolic and up-to techniques
  (\Cref{sec:equiv,sec:optimizing}).
\item We realize our approach in a Coq-based framework for
  automatically constructing equivalence proofs (\Cref{sec:impl}).
  Crucially, our design integrates off-the-shelf SMT solvers into a
  verification loop within Coq.
\item We explore \SYSTEM's expressiveness and scalability
  (\Cref{sec:case-studies}), finding that it can handle common protocol
  verification challenges and can perform translation validation for
  an existing parser compiler.
\end{itemize}
Overall, we believe \SYSTEM represents a promising step toward the
vision of certified proofs for protocol parsers.

\begin{figure}
\centering
\def\arraystretch{0}
\begin{tabular}{p{0.225\textwidth}:p{0.22\textwidth}}
\begin{lf-grey}
q1 {
  extract(mpls, 32);
  select(mpls[23:23]) {
    0 => q1
    1 => q2
  }
}
q2 {
  extract(udp, 64);
  goto accept
}
\end{lf-grey}
\vspace{3.5em}
\includegraphics[width=0.20\textwidth]{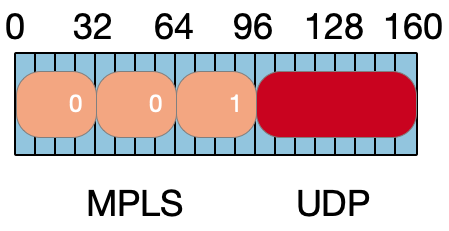}
&\begin{lf-grey}
q3 {
  extract(old, 32);
  extract(new, 32);
  select(old[23:23],
         new[23:23]) {
    (0, 0) => q3
    (0, 1) => q4
    (1, _) => q5
  }
}
q4 {
  extract(udp, 64);
  goto accept
}
q5 {
  extract(tmp, 32);
  udp := new ++ tmp;
  goto accept
}
\end{lf-grey}
\end{tabular}
\caption{Reference (q1, q2) and vectorized (q3, q4, q5) parsers for MPLS and UDP headers (depicted inset). }%
\label{fig:example-parser-syntax}
\vspace{-0.5em}
\end{figure}

\section{Overview}%
\label{sec:overview}

We now give a high-level overview of our automata model and
equivalence checking framework through an illustrative example.
Suppose we would like to parse packets with MPLS~\cite{rfc3031}
and UDP~\cite{rfc768} headers.
An MPLS header is a sequence of 32-bit \emph{labels}.
Rather than prefix this sequence with its length, the MPLS format marks the end
of the sequence with a label whose 24th bit is $1$.
This is analogous to the role of the null terminator in C strings.
In our example, the MPLS header is followed by an 8-byte UDP header.

We can parse this format in our \emph{P4 automata} model of parsers,
which is based on a subset of P4.
A P4 automaton is a state machine that parses a packet bitstring
into a collection of \emph{headers} stored in global variables, ultimately
either accepting or rejecting the packet.
Each state in a P4 automaton contains a program that may assign to variables and
extract some number of bits from the front of the packet into a variable.
For instance, an \texttt{extract(h, 64)} operation removes 64 bits from the front 
of the packet and stores them into the header \texttt{h}.
Next, the machine transitions to a new state by branching on the
contents of its header variables.
This is accomplished with either an unconditional transition of the form
\texttt{goto state} or a conditional transition of the form \texttt{select(e) 
\{ \(\overline{\texttt{pat => state}}\) \}}.
A \texttt{select} evaluates \texttt{e} and transitions to the state associated with the first matching pattern in \(\overline{\texttt{pat}}\).

We give two P4 automata for our format in \cref{fig:example-parser-syntax}.
The first automaton is a reference parser with a state \texttt{q1} that
parses MPLS and another state \texttt{q2} that parses UDP\@.
The second automaton has been \emph{vectorized} to parse two MPLS labels at
a time in state \texttt{q3}.
When the second label is the bottom of the stack, the vectorized parser
goes on to handle UDP normally (\texttt{q4}).
If the first label is the bottom of the stack, however, the vectorized
parser marshals the 32 bits of the ill-fated second label into UDP, along
with the remaining 32 bits of UDP header data remaining in the packet
(\texttt{q5}).

The two parsers in \cref{fig:example-parser-syntax} each use different names for
the header (e.g., \texttt{mpls} vs. \texttt{old}/\texttt{new}). Also, those
headers are overwritten multiple times---a real P4 parser would use an
array-like data structure called a \emph{header stack} to store the
labels~\cite[Section~8.17]{p4-spec}. Our language does not support header stacks
directly, although they can be emulated. Here, we omit this detail for
simplicity, and we focus on proving that the parsers accept the same sets of
packets. \SYSTEM can also be used to prove relational properties involving
the header values---see \Cref{sec:case-studies} for details.

\paragraph{Tractable Equivalence Checking.}
Our equivalence algorithm is inspired by Moore's
algorithm~\cite{hopcroft-1971} applied to the domain of P4 automata.
It is a worklist-style algorithm that begins with a coarse approximation of
language equivalence and iteratively refines it by analyzing the joint state
space of the two automata.
P4 automata have finitely many states and their header variables are
fixed-size bitvectors, so their configuration spaces are finite.
However, because of the large bitvectors encoded in their stores, P4 automata
may have an intractably large configuration space.
For instance, the automata in \cref{fig:example-parser-syntax} have a joint configuration space on the order of $2^{128} \approx 10^{38}$ states!
So, naive bisimulation-based approaches will never be tractable for
realistic automata.
We address this challenge by representing large relations \emph{symbolically},
rather than keeping track of concrete sets of configuration pairs.

Furthermore, we prune the configuration space from the start using a
simple reachability analysis.
This lets us avoid spurious search steps through unreachable configurations.

In the automata-theoretic semantics of the reference MPLS parser, the
$\mathtt{extract(udp, 64)}$ call performs $64$ steps that read one bit of
the packet into the buffer. The 64th step empties the buffer into the
\texttt{udp} header variable, and transitions to the accept state.
This bit-by-bit approach is needed to relate parsers that read the
packet in differently-sized chunks---as is the case for states \texttt{q1} and \texttt{q3} in \Cref{fig:example-parser-syntax}.
However, a naive search for a bisimulation that treats each step
separately would have a huge symbolic state and too many SMT
queries.
To counteract this, we introduce \emph{bisimulations with leaps},
which keeps the symbolic state compact and avoids redundant SMT
queries, processing multiple consecutive steps in each iteration.
Together, these optimizations make it feasible to compute
bisimulations for realistic parsers.

\begin{figure}
  \begin{small}
\(
\begin{array}{rcll}
  h &\in &H&\text{header names}\\
  n &\in &\mathbb{N}&\text{natural numbers}\\
  bv &\in &\bits^*&\text{bitvector}\\
  \syndef{\Sexp}{\Shdr}{headers}
  \syncase{\Sbits}{bitvectors}
  \syncase{\Sslice{\Sexp}{\Snat_1}{\Snat_2}}{bitslices}
  \syncase{\Sconcat{\Sexp_1}{\Sexp_2}}{concatenation}
  \syndef{\Spat}{\Sbits}{exact match}
  \syncase{\Sany}{wildcard}
  q &\in &Q \cup \{ \acc, \rej \}&\text{state names}\\
  \syndef{\Scase}{\Smany{\Spat} \Rightarrow q}{select case}
  \syndef{\Strans}{\Sgoto{q}}{direct}
  \syncase{\Ssel{\Smany{\Sexp}}{\Smany{\Scase}}}{select}
  \syndef{\Sop}{\Sextract{\Shdr}}{extract}
  \syncase{\Sassign{\Shdr}{\Sexp}}{assign}
  \syncase{\Sseq{\Sop_1}{\Sop_2}}{sequence}
  \syndef{\Sst}{\Sstate{q}{\Sop}{\Strans}}{states ($q \in Q$)} 
  \syndef{\Saut}{\Smany{\Sst}}{P4 automaton}
\end{array}
\)
\end{small}
  \caption{Internal syntax for P4 automata.}\label{fig:internal-syntax}
\end{figure}

\paragraph{Certifying Parser Equivalence.}
%
To make \SYSTEM usable in larger developments, we need it to produce
reusable proof certificates that can be checked by the Coq kernel.
Obtaining full certificates of equivalence from a push-button tool is
a significant engineering challenge.
Rather than write a solver in Coq for our verification conditions, which sit
roughly in the first-order theory of bitvectors, we chose to use external SMT
solvers.
This had engineering and performance benefits, but getting Coq and external
solvers to work together still posed several challenges.
First, existing interfaces between Coq and SMT solvers did not meet our needs.
We tried using existing plugins for proving Coq theorems with external
solvers~\cite{czajka-kaliszyk-2018,armand-etal-2011}, but found that
they scaled poorly or lacked support for key logical operators.
To address this, we developed a first-order theory of bitvectors
in Coq, as well as a plugin that pretty-prints this logic
in SMT-LIB format~\cite{barrett-etal-2010} and discharges the query using
an off-the-shelf solver.
We did not implement proof reconstruction, which converts the SMT
solver refutation into a Coq proof term.
Consequently the solver output and our pretty-printer must be
trusted, although this restriction could be lifted in future work.

Second, in order to target this low-level logic and improve the scalability of
our tool, we developed and verified a chain of compilation steps that go from the high-level
logic used in our algorithm to the low-level logic sent to solvers.
This process compiles away features like finite maps,
but also performs algebraic simplifications and other rewrites
to keep the size and complexity of our SMT queries under control.

Third, in order to communicate with SMT solvers, our algorithm needed to
perform I/O, which Coq programs are generally not allowed to do.
Coq tactics, however, are allowed to perform I/O, because the proof
they produce is still checked by the Coq kernel.
We rephrased our algorithm as a proof search problem (c.f.
\cref{fig:pre-bisim-coq}) and developed a custom Coq tactic as an
escape hatch, allowing the algorithm to consult an SMT solver while
still producing a Coq certificate (modulo the soundness of the solver
and our plugin).

\section{Parser Model}\label{sec:automata}

We now describe \emph{P4 automata} (P4As), an automata-theoretic model close to P4's parsing language~\cite{p4-spec}.
A P4A is a state machine that
  (1)~decides whether to accept a packet and
  (2)~builds a data structure (the \emph{store}) using the packet data.
The store consists of bitvectors called \emph{headers},\footnote{%
    Initial header values are undefined in P4~\cite[Sections~6.7
    and~8.22]{p4-spec}; our semantics considers them part of the packet.
}
and is a representation of the (partially) parsed packet used within the parser, but also in later processing phases.
If a P4A consumes data in each state, it terminates on finite packets.\footnote{
    Such a restriction is allowed by P4 specification~\cite[Sec.~12.10]{p4-spec}.
}

For example, state \texttt{q1} of the reference MPLS/UDP parser in \Cref{fig:example-parser-syntax} extracts 32 bits into the \texttt{mpls} header, before looping back to \texttt{q1} or transitioning to \texttt{q2} to parse a UDP header.

Concretely, a P4A is composed of states, each of which acts in two steps:
  first, it runs its internal program, which consumes some bits from the packet and updates the headers;
  then, it decides (based on the store) whether to accept (resp.\ reject) the
  packet by transitioning to the $\acc$ (resp.\ $\rej$) state, or continue processing the remainder elsewhere.
This second step defines the state's transitions.

\subsection{Syntax}

The syntax for P4As is best understood by example.
Consider the two programs in \cref{fig:example-parser-syntax}.
Together, these contain five states, named \texttt{q1} through \texttt{q5}.
Each state contains code, consisting of assignments and \texttt{extract} statements, and ending in a \texttt{select} or \texttt{goto} statement that defines the outgoing transitions.
Headers function as variables whose scope and lifetime is shared between states.
For instance, in \texttt{q5}, the vectorized parser extracts bits into \texttt{tmp}, and then stores the contents of \texttt{new} and \texttt{tmp} in \texttt{udp}, before accepting.

We formally define the syntax in \cref{fig:internal-syntax}, parameterized over a finite set of states $Q$, and a finite set of \emph{header names} $H$.
Each header $h \in H$ has an associated \emph{size} $\sfunc(h) \in \naturals^+$.
We refrain from specifying these parameters explicitly, as they can be inferred from the program text.
For instance, in the P4A on the left in \cref{fig:example-parser-syntax}, $Q = \{ \texttt{q1}, \texttt{q2} \}$, and $H = \{ \mathtt{mpls}, \mathtt{udp} \}$, while header sizes are $\sfunc(\mathtt{mpls}) = 32$ and $\sfunc(\mathtt{udp}) = 64$.

P4As associate with each state $q \in Q$ an operation block $\Sop(q)$ and transition block $\Strans(q)$.
Crucially, we require that least one call to \texttt{extract} appears in the operations of each state.
This guarantees that each state makes some progress on the packet, which ensures termination of both the parsing process and our equivalence checking algorithm.

\subsection{Semantics}

To provide a semantics for P4As, we first assign a semantics to the operation and transition code associated with each state.
We fix a P4A $\Saut$ with states $Q$, headers $H$ and sizes $\sfunc$.
We write $|\Sbits|$ for the length of $\Sbits \in \bits^*$.
We define $S$ as the finite set of functions $s: H \to \bits^*$ where $|s(h)| = \sfunc(h)$.

We first give a semantics to expressions.

\begin{definition}[Expression Semantics]%
\label{def:expression-semantics}
Let $w, x \in \bits^*$.
We write $wx$ for their concatenation.
If $\Snat_1, \Snat_2 \in \naturals$, we write $\Sslice{w}{\Snat_1}{\Snat_2}$ for the zero-indexed substring starting at position $\min(\Snat_1, |w|-1)$ and ending at $\min(\Snat_2, |w|-1)$, inclusive.

Given an expression $\Sexp$, we inductively define its semantics in the form of a function $\semexp{\Sexp}: S \to \bits^*$, as follows:
\begin{align*}
\semexp{\Shdr}(s) &= s(h) &
  \semexp{\Sslice{\Sexp}{\Snat_1}{\Snat_2}}(s) &= \Sslice{\semexp{\Sexp}(s)}{\Snat_1}{\Snat_2} \\
\semexp{\Sbits}(s) &= \Sbits &
  \semexp{\Sconcat{\Sexp_1}{\Sexp_2}}(s) &= \semexp{\Sexp_1}(s) \semexp{\Sexp_2}(s)
\end{align*}
There is a straightforward typing judgement $\typexpbare$, where $\typexp{\Sexp}{\Snat}$ implies that $|\semexp{\Sexp}(s)| = \Snat$.
We elide this definition.
\end{definition}

Next, we give a semantics to operations and transitions, which constitute the code that can appear inside states.

\begin{definition}[Operation Semantics]
When $s \in S$, $h \in H$ and $v \in \bits^{\sfunc(h)}$, we use $s[v/h]$ to denote the store where $s[v/h](h) = v$, and $s[v/h](h') = s(h')$ for all $h' \in H\setminus\{h\}$.

For operations $\Sop$, define $\opsize{\Sop} \in \naturals$ inductively, as follows:
\begin{mathpar}
\opsize{\Sassign{\Shdr}{\Sexp}} = 0
\and
\opsize{\!\Sextract{h}} = \sfunc(h) \vspace{-2mm}
\\
\opsize{\Sseq{\Sop_1}{\Sop_2}} = \opsize{\Sop_1} + \opsize{\Sop_2}
\end{mathpar}
Intuitively, $\opsize{\Sop}$ is the exact number of bits necessary to execute all \texttt{extract} statements that appear in $\Sop$.

For each block of operations $\Sop$, we define a \emph{partial} function $\semop{\Sop}: S \times \bits^* \rightharpoonup S \times \bits^*$, as follows:
\begin{align*}
\semop{\Sassign{\Shdr}{\Sexp}}(s, w) &= \angl{s[v/h], w} \tag{if \ensuremath{\semexp{\Sexp}(s) = v} and \ensuremath{|v| = \sfunc(h)}}\\
\semop{\Sextract{\Shdr}}(s, xy) &= \angl{s[x/h], y} \tag{if \ensuremath{|x| = \sfunc(h)}} \\
\semop{\Sseq{\Sop_1}{\Sop_2}}(s, w) &= \semop{\Sop_2}(\semop{\Sop_1}(s, w))
\end{align*}
There exists a type judgement $\typopbare$ such that if $\typop{\Sop}$ then $\semop{\Sop}(s, w) \in S \times \{ \epsilon \}$ for all $\Sbits \in \bits^{\opsize{\Sop}}$.
\end{definition}

\begin{definition}[Pattern and Transition Semantics]
For a pattern $\Spat$, define $\sempat{\Spat} \subseteq \bits^*$ by case distinction:
\begin{mathpar}
\isem{\Sbits}_\mathcal{P} = \{ \Sbits \}
\and
\isem{\Sany}_\mathcal{P} = \bits^*
\end{mathpar}
Given a transition block $\Strans$, we define a partial function $\semtrans{\Strans}: S \rightharpoonup Q \cup \{ \acc, \rej \}$ inductively, as follows:
\begin{mathpar}
\isem{\Sgoto{q}}_\mathcal{T}(s) = q
\and
\semtrans{\Ssel{\Smany{\Sexp}}{}}(s) = \rej
\and
\inferrule{%
    \forall i.\ \isem{\Sexp_i}_\mathcal{E}(s) = v_i \\
    q' = \semtrans{\Ssel{\Smany{\Sexp}}{\Smany{\Scase}}}(s)
}{%
    \semtrans{\Ssel{\Smany{\Sexp}}{\Smany{\Spat} \Rightarrow q;\ \Smany{\Scase}}}(s) =
        {\begin{cases}
        q & \forall i.\ v_i \in \sempat{\Spat_i} \\
        q' & \text{otherwise}
        \end{cases}}
}
\end{mathpar}
As before, a straightforward type judgement $\typtransbare$ can be formulated such that $\typtrans{\Strans}$ implies that $\semtrans{\Strans}(s)$ is defined.
\end{definition}

We now have the ingredients necessary to define the dynamics of a P4A in terms of a deterministic finite automaton (DFA).
To facilitate the comparison of P4As that consume packets in differently-sized chunks, this DFA buffers until it has read enough bits to execute the \texttt{extract} blocks associated with the current state.
We first precisely define a configuration of a P4A, as follows.

\begin{definition}[Configurations]
A \emph{configuration} is a triple
\[
  \angl{q, s, w} \in (Q \cup \{ \acc, \rej \}) \times S \times \bits^*.
\]
where $|w| < \opsize{\Sop(q)}$ if $q \in Q$, and $w = \epsilon$ otherwise.
We write $C$ for the (finite) set of configurations, and $F$ for the \emph{accepting configurations}: $\{ \angl{\acc, s, \epsilon} \in C : s \in S \}$.
\end{definition}

We can define a bit-by-bit step function on configurations, which implements the idea of filling up the store before actuating the transition, outlined above.
\begin{definition}[Configuration Dynamics]
We define the step function $\delta: C \times \bits \rightharpoonup C$ as follows.
Let $c = \angl{q, s, w} \in C$.
If $q \in Q$, then we define $\delta(c, b)$ by setting
\begin{mathpar}
\delta(c, b) =
    \begin{cases}
    \angl{q, s, wb} & |wb| < \opsize{\Sop(q)} \\[1mm]
    \angl{\semtrans{\Strans(q)}(s'), s', \epsilon} & \semop{\Sop(q)}(s, wb) = \angl{s', \epsilon} \\[1mm]
    \end{cases}
\end{mathpar}
Otherwise, if $q \in \{ \acc, \rej \}$, then $\delta(c, b) = \angl{\rej, s, \epsilon}$.

There exists a type judgement $\typautbare$ such that $\typaut{\Saut}$ implies that $\delta$ is well-defined and total; again, we omit its definition.%
\footnote{
    Our requirement that each state extracts some bits is part of this typing judgement, and in fact necessary in order for the definition of $\delta$ to be useful.
    Because a transition is triggered by the final bit, if $\opsize{\Sop(q)} = 0$ for some state, then there would be no way to actuate this transition.
}
\end{definition}

To match the behavior of P4 parsers, accepting states should not parse any
further input. As a consequence a configuration of the form $\angl{\acc, s,
\epsilon}$ steps unconditionally to $\angl{\rej, s, \epsilon}$.

Put together, $\angl{C, \delta, F}$ is a DFA\@.
We can therefore define the language semantics of our parser $\Saut$ as a function $\semaut{\Saut}: Q \times S \to 2^{\bits^*}$, where $2^X$ denotes the set of subsets of a set $X$.
This semantics associates with each initial state and store the set of bit-strings that lead to an accepting configuration.

\begin{definition}[Multi-Step Configuration Dynamics]
We can lift $\delta$ to $\delta^*: C \times \bits^* \rightharpoonup C$ as follows:
\begin{mathpar}
\delta^*(c, \epsilon) = c
\and
\delta^*(c, bw) = \delta^*(\delta(c, b), w)
\end{mathpar}
Given $c \in C$, we define its \emph{language} $L(c) \subseteq \bits^*$ as follows:
\[
    L(c) = \{ w \in \bits^* : \delta^*(c, w) \in F \}
\]
Given $q \in Q$ and $s \in S$, we define $\semaut{\Saut}(q, s) = L(q, s, \epsilon)$.
\end{definition}

Our semantics embeds the initial store in the start state.
Our equivalence checking procedure can help verify that packet acceptance does not depend on the initial store value.

\section{Symbolic Equivalence Checking}\label{sec:equiv}

Many verification questions about P4As can be phrased as questions about the underlying DFAs.
For instance, let $\Saut_1$ and $\Saut_2$ be the P4As from \cref{fig:example-parser-syntax}, suppose we want to verify that they accept the same packets when started from certain initial states \texttt{q1} and \texttt{q3}, regardless of their initial store.
To do this, we could check whether $L(\mathtt{q1}, s_1, \epsilon) = L(\mathtt{q3}, s_2, \epsilon)$ for all $s_1, s_2 \in S$.
This problem is decidable, because $S$ is finite and language equivalence of DFAs is decidable~\cite{moore-1956}.

Unfortunately, the DFA arising from a P4A may be extremely large: every $q \in Q$ contributes $|S| \times 2^{\opsize{\Sop(q)}-1}$ configurations.
Even for simple parsers, this leads to an intractably large configuration space.
For instance, for the reference MPLS parser on the left in \cref{fig:example-parser-syntax}, $|S| = 2^{96}$; a back-of-the-envelope calculation then tells us that $|C| \geq 10^{38}$.

Moreover, we anticipate that a large portion of the configuration space is reachable, and should therefore be taken into consideration.
This is because parsers tend to propagate every bit of the packet into the store in order to facilitate packet reconstruction for forwarding.
Off-the-shelf algorithms for DFAs are therefore unlikely to scale to this setting.

In this section, we develop an algorithm that can answer several questions about P4As.
This algorithm mitigates state space explosion by representing configurations symbolically.
Our presentation focuses on deciding language equivalence of configurations.
As a consequence, the procedure can be thought of as a variation on Moore's algorithm~\cite{moore-1956}.
We discuss more general applications in Section~\ref{sec:case-studies}.

For the sake of simplicity, we fix a P4A $\Saut$ with underlying DFA $\angl{C, \delta, F}$ for the remainder of this section.
One can compare configurations in two different P4As by taking their disjoint sum, renaming states and headers as necessary.

\subsection{A Symbolic Approach}

A sound and complete method to show that two configurations of our DFA $\angl{C, \delta, F}$ accept the same language is to demonstrate that they are related by a \emph{bisimulation}~\cite{hopcroft-karp-1971}, i.e., a relation $R \subseteq C \times C$ such that when $c_1 \mathrel{R} c_2$,
    (1)~$c_1 \in F$ if and only if $c_2 \in F$; and
    (2)~$\delta(c_1, b) \mathrel{R} \delta(c_2, b)$ for all $b \in \bits$.

A language equivalence checking algorithm for DFAs typically tries to build some form of bisimulation.
Because $C$ may be very large, representing a bisimulation by listing its constituent pairs becomes intractable quickly.
Luckily, we can write down bisimulations symbolically.

\begin{example}%
\label{ex:lots-of-pairs}
Suppose $\Saut$ is the disjoint sum of the MPLS parsers displayed in \cref{fig:example-parser-syntax}.
Let $R$ be the smallest relation on $C$ satisfying the following rules for all $s_1, s_2 \in S$:
\[
    \inferrule{%
        w \in \bits^{32} \quad
        x \in \bits^* \quad
        |x| < 32
    }{%
        \angl{\mathtt{q2}, s_1, wx} \mathrel{R} \angl{\mathtt{q5}, s_2, x}
    }
    \quad
    \inferrule{%
        q \in \{ \acc, \rej \}
    }{%
        \angl{q, s_1, \epsilon} \mathrel{R} \angl{q, s_2, \epsilon}
    }
\]
$R$ is a bisimulation, and thus all configurations related by $R$ have the same language.
Clearly, this representation is much more concise than listing the contents of $R$ explicitly.
\end{example}

\begin{figure}
  \begin{small}
  \[
    \begin{array}{rcll}
      x &\in& \Var & \text{variables}\\
      \syndef{\Sbexp}{\Sbits}{literal}
      \syncase{\Sbbuf{<}, \Sbbuf{>}}{left and right buffer}
      \syncase{\Sbhdr{<}, \Sbhdr{>}}{left and right header}
      \syncase{\Svar}{variable}
      \syncase{\Sslice{\Sbexp}{\Snat_1}{\Snat_2}}{slice}
      \syncase{\Sconcat{\Sbexp_1}{\Sbexp_2}}{concat}
      \syndef{\Spred}{\Sbexp_1 = \Sbexp_2}{bitvector equality}
      \syncase{\Sstatepred{q}{<}, \Sstatepred{q}{>}}{left and right state assertion}
      \syncase{\Sbuflen{\Snat}{<}, \Sbuflen{\Snat}{>}}{left and right buffer length}
      \syndef{\phi}{\bot}{bottom}
      \syncase{\Spred}{atomic predicate}
      \syncase{\phi_1 \implies \phi_2}{implication}
    \end{array}
  \]
  \end{small}
  \caption{Syntax for relations on configurations.}%
  \label{fig:confrel-syntax}
\end{figure}

To systematically represent and manipulate symbolic relations on configurations, we propose the syntax in \cref{fig:confrel-syntax}.
Its formulas are generated by equality assertions between expressions built over the buffers and stores of both configurations, as well as predicates about states and buffer lengths.
We also include variables $x \in \Var$ for later use. We omit conjunction
($\vee$) and disjunction ($\wedge$) from the syntax to keep our
definitions brief. They are derivable from $\implies$
and $\bot$, so we will still use them in the sequel as abbreviations.

\begin{example}%
\label{ex:symbolic-bisimulation}
We can describe the pairs matching the second rule from \cref{ex:lots-of-pairs} using the following formula
\[
    \phi = (\Sstatepred{\acc}{<} \wedge \Sstatepred{\acc}{>}) \vee (\Sstatepred{\rej}{<} \wedge \Sstatepred{\rej}{>})
\]
Given $n < 32$, we can choose the formula $\phi_n$ to symbolize the first rule from \cref{ex:lots-of-pairs} where $|x| = n$:
\[
    \phi_n = \Sbuflen{(n+32)}{<} \wedge \Sstatepred{q2}{<} \wedge \Sbuflen{n}{>} \wedge \Sstatepred{q5}{>} \wedge \Sslice{\Sbbuf{<}}{0}{31} = \Sbbuf{>}
\]
In total, $R$ is represented by the formula $\phi \vee \phi_0 \vee \cdots \vee \phi_{31}$.
\end{example}

\begin{definition}
A \emph{valuation} is a function $\sigma: \Var \to \bits$.
For every bitvector expression $\Sbexp$ and valuation $\sigma$, we define $\isem{\Sbexp}_{\mathcal{B}}^\sigma: C \times C \to \bits^*$ inductively as follows, where $c^<, c^> \in C$ are such that $c^\lessgtr = \angl{q^\lessgtr, s^\lessgtr, w^\lessgtr}$ for ${\lessgtr} \in \{ <, > \}$:
\begin{mathpar}
\bsem{\Sbits}{\sigma}{c^<, c^>} = \Sbits \and
\bsem{\Sbbuf{\lessgtr}}{\sigma}{c^<, c^>} = w^\lessgtr \and
\bsem{\Svar}{\sigma}{c^<, c^>} = \sigma(\Svar) \and
\bsem{\Sbhdr{\lessgtr}}{\sigma}{c^<, c^>} = s^\lessgtr(h) \and
\end{mathpar}
The cases for slices and concatenation are as in \cref{def:expression-semantics}.

For a formula $\phi$ and valuation $\sigma$, define $\lsem{\phi}{\sigma}$ as the least relation on $C$ satisfying the following rules for all $c^<, c^> \in C$, where $c^\lessgtr = \angl{q^\lessgtr, s^\lessgtr, w^\lessgtr}$, and $n^\lessgtr = |w^\lessgtr|$ for ${\lessgtr} \in \{ <, > \}$:
\begin{mathpar}
\begin{array}{c}
c^< \lsem{\Sstatepred{q}{<}}{\sigma} c^> \\[2mm]
c^< \lsem{\Sstatepred{q}{>}}{\sigma} c^> \\[2mm]
c^< \lsem{\Sbuflen{n}{<}}{\sigma} c^> \\[2mm]
c^< \lsem{\Sbuflen{n}{>}}{\sigma} c^>
\end{array} \and
\begin{array}{c}
\inferrule{%
    \isem{\Sbexp_2}_{\mathcal{B}}(c_1, c_2, \sigma) = \isem{\Sbexp_2}_{\mathcal{B}}(c_1, c_2, \sigma)
}{%
    c_1 \lsem{\Sbexp_1 = \Sbexp_2}{\sigma} c_2
} \\[6mm]
\inferrule{%
    c_1 \lsem{\phi_1}{\sigma} c_2 \implies c_1 \lsem{\phi_2}{\sigma} c_2
}{%
    c_1 \lsem{\phi_1 \implies \phi_2}{\sigma} c_2
}
\end{array}
\end{mathpar}
Let $\phi$ and $\psi$ be formulas.
We write $\sem{\phi}_\mathcal{L}$ for the relation on $C$ where $c_1 \sem{\phi}_{\mathcal{L}} c_2$ if and only if $c_1 \psem{\phi}{\mathcal{L}}{\sigma} c_2$ for all valuations $\sigma$.
Finally, we write $\phi \vDash \psi$ when ${\sem{\phi}_{\mathcal{L}}} \subseteq {\sem{\psi}_{\mathcal{L}}}$.
\end{definition}

Note that because there are finitely many configurations and valuations, entailments are decidable.
We will revisit this particular decision problem in Sections~\ref{sec:optimizing} and~\ref{sec:impl}.

We can now define symbolic bisimulations, as follows.

\begin{definition}[Symbolic Bisimulation]
A \emph{symbolic bisimulation} is a formula $\phi$ such that $\sem{\phi}_\mathcal{L}$ is a bisimulation.
\end{definition}

Finding a (symbolic) bisimulation is a sound and complete method to establish language equivalence of states.

\begin{restatable}{lemma}{restatesymbolicbisimulationsoundcomplete}%
\label{lemma:symbolic-bisimulation-sound-complete}
For formulas $\phi$, the following are equivalent:
\begin{enumerate}
    \item
    There exists a symbolic bisimulation $\psi$ such that $\phi \vDash \psi$.

    \item
    There exists a bisimulation $R$ such that ${\sem{\phi}_\mathcal{L}} \subseteq R$.

    \item
    If $c_1 \sem{\phi}_\mathcal{L} c_2$, then $L(c_1) = L(c_2)$.
\end{enumerate}
\end{restatable}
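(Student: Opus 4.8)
The statement is an equivalence of three conditions about a formula $\phi$. The natural strategy is to prove the cyclic chain of implications $(1)\Rightarrow(2)\Rightarrow(3)\Rightarrow(1)$, leaning on the classical fact (cited just before the lemma) that bisimilarity coincides with language equivalence in a DFA.

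For $(1)\Rightarrow(2)$: if $\psi$ is a symbolic bisimulation with $\phi\vDash\psi$, then by definition $\sem{\psi}_\mathcal{L}$ is a bisimulation, and $\phi\vDash\psi$ unfolds to $\sem{\phi}_\mathcal{L}\subseteq\sem{\psi}_\mathcal{L}$. So taking $R=\sem{\psi}_\mathcal{L}$ witnesses $(2)$ immediately. For $(2)\Rightarrow(3)$: suppose $R$ is a bisimulation with $\sem{\phi}_\mathcal{L}\subseteq R$, and $c_1\sem{\phi}_\mathcal{L}c_2$. Then $c_1\mathrel{R}c_2$, and the standard soundness argument for bisimulations on DFAs shows $L(c_1)=L(c_2)$: by induction on a word $w$, condition (2) of the bisimulation definition keeps $\delta^*(c_1,w)\mathrel{R}\delta^*(c_2,w)$, and condition (1) at the end ensures $\delta^*(c_1,w)\in F \iff \delta^*(c_2,w)\in F$, i.e. $w\in L(c_1)\iff w\in L(c_2)$. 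This half is entirely classical and can cite \cite{hopcroft-karp-1971}.

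The only substantive direction is $(3)\Rightarrow(1)$, which requires \emph{constructing} a symbolic bisimulation. Here the key observation is that language equivalence $\equiv_L$, defined by $c_1\equiv_L c_2 \iff L(c_1)=L(c_2)$, is itself the \emph{largest} bisimulation on the DFA $\angl{C,\delta,F}$; this is again standard. The plan is therefore to exhibit a formula $\psi$ whose semantics $\sem{\psi}_\mathcal{L}$ equals (or at least contains $\sem{\phi}_\mathcal{L}$ while remaining a subrelation of) $\equiv_L$. The crucial enabling fact is finiteness: the note preceding the lemma states that $C$ and the set of valuations are finite, so $\equiv_L$ relates only finitely many pairs $\angl{c_1,c_2}$. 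For each such pair one can write an atomic conjunction pinning down the two states via $\Sstatepred{q}{<},\Sstatepred{q}{>}$, the two buffer lengths via $\Sbuflen{n}{<},\Sbuflen{n}{>}$, and the concrete buffer and header contents via equalities of the form $\Sbbuf{<}=\Sbits$ and $\Sbhdr{<}=\Sbits$; taking the finite disjunction over all $\equiv_L$-related pairs yields a formula $\psi$ with $\sem{\psi}_\mathcal{L}=\equiv_L$. Since $\equiv_L$ is a bisimulation, $\psi$ is a symbolic bisimulation; and from $(3)$ we have $\sem{\phi}_\mathcal{L}\subseteq\equiv_L=\sem{\psi}_\mathcal{L}$, i.e. $\phi\vDash\psi$, establishing $(1)$.

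I expect the main obstacle to be the bookkeeping in $(3)\Rightarrow(1)$: verifying that a configuration can indeed be completely characterized by a conjunction in the formula syntax of \cref{fig:confrel-syntax}. A configuration is a triple $\angl{q,s,w}$, so one must assert the state, each header value $s(h)$ (finitely many, each a fixed-size bitvector equality), and the buffer $w$ together with its length---and confirm that $\lsem{\cdot}{\sigma}$ on these atoms picks out exactly the intended singletons. The finiteness of $C$ (hence of the disjunction) is what makes this construction possible at all, so the argument should foreground that point; everything else reduces to the classical DFA bisimulation theory, which may simply be cited.
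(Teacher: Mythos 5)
Your proposal is correct and follows essentially the same route as the paper's proof: the first two implications are handled identically, and for $(3)\Rightarrow(1)$ the paper likewise builds a finite disjunction $\psi := \bigvee_{L(c_1)=L(c_2)} \phi^<(c_1)\wedge\phi^>(c_2)$, where each conjunct pins down a configuration via its state assertion, header equalities, and buffer equality, so that $\sem{\psi}_\mathcal{L}$ is exactly language equivalence and hence a bisimulation. The only cosmetic difference is that you additionally assert buffer lengths, which is redundant once the buffer contents are fixed.
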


\begin{algorithm}[t]
\DontPrintSemicolon%
\caption{Symbolic equivalence checking.}\label{algorithm:symbolic-refinement}%
\label{alg:abstract-symbolic-refinement}

\Input{A formula $\phi$ representing initial states.}
\Input{A set of formulas $I$ s.t.\ for all $c_1, c_2 \in C$,
    \[
        \left\lbrack \forall \psi \in I.\ c_1 \sem{\psi} c_2 \right\rbrack
        \Leftrightarrow \lbrack c_1 \in F \Leftrightarrow c_2 \in F \rbrack
    \]
}
\Input{%
    A function $\WP$ s.t.\ for all $\psi$, and $c_1, c_2 \in C$,
    \[
        \hspace{-3ex}\left\lbrack \forall b \in \bits.\ \delta(c_1, b) \sem{\psi} \delta(c_2, b) \right\rbrack
        \Leftrightarrow c_1 \sem{\bigwedge \WP(\psi)} c_2
    \]
}
\Output{{\True} if and only if for all $c_1, c_2 \in C$ with $c_1 \sem{\phi}_\mathcal{L} c_2$, it holds that $L(c_1) = L(c_2)$}
\BlankLine%

$R \gets \emptyset;\; T \gets I$\;
\While{$T \neq \emptyset$}{
    pop $\psi$ from $T$\;
    \If{{\normalfont\textbf{not}} $\bigwedge R \vDash \psi$}{\label{line:entailment}%
        $R \gets R \cup \{ \psi \}$\;
        $T \gets T \cup \WP(\psi)$\;
    }
}
\KwRet{\rm $\True$ if $\phi \vDash \bigwedge R$, otherwise $\False$}\;
\end{algorithm}

\subsection{The Weakest Symbolic Bisimulation}

To search for a symbolic bisimulation, we turn to Moore's algorithm~\cite{moore-1956}.
In its concrete formulation, this algorithm approximates the largest (coarsest) bisimulation from above, by iteratively removing non-bisimilar pairs.
Eventually, the process stops, at which point the remaining pairs must be bisimilar; hence, the computed relation is the largest bisimulation.
Two configurations are related by some bisimulation if and only if they are related by the largest bisimulation, so the algorithm concludes with a simple containment check.

Moore's algorithm can be made symbolic, by representing the current overapproximation as a formula and successively strengthening it, thus converging to the weakest symbolic bisimulation.
We present an abstract formulation of this process in \cref{algorithm:symbolic-refinement}.
The algorithm has two parameters.
\begin{itemize}
    \item
    The formulas in $I$ constitute the initial overapproximation of the weakest symbolic bisimulation.
    In Section~\ref{sec:case-studies}, we consider instantiations of $I$ that can be used to verify different but related properties.
    \item
    The function $\WP$ takes a formula $\phi$, and outputs a set of formulas
    whose conjunction represents a \emph{weakest precondition} of $\phi$, in the
    sense that two configurations are related by \emph{all} formulas in
    $\WP(\phi)$ if and only if they step into configurations related by $\phi$.
\end{itemize}

\noindent
\cref{algorithm:symbolic-refinement} builds the weakest symbolic bisimulation as a set of conjuncts $R$, maintaining a frontier $T$ of formulas to be considered.
The frontier is initially $I$.
In each iteration, we pop a conjunct $\psi$ from $T$ and check if it is entailed by $\bigwedge R$.
If $\bigwedge R \not\vDash \phi$, then $\psi$ constitutes a novel restriction, and we add it to $R$.
Because bisimulations are closed under steps, we add the weakest preconditions of $\psi$ to $T$, to be checked later.
If $\bigwedge R \vDash \psi$, including $\psi$ in $R$ would not change $\bigwedge R$, so we move on.
The loop terminates when $T$ is empty; at this point, $\bigwedge R$ will be the weakest symbolic bisimulation, and the algorithm checks $\phi \vDash \bigwedge R$.
We instantiate the parameters momentarily; first, we show that the algorithm is correct.

\begin{restatable}{theorem}{restatesymbolicrefinementcorrect}%
\label{theorem:symbolic-refinement-correct}
\cref{algorithm:symbolic-refinement} is correct.
\end{restatable}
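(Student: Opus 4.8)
The plan is to show that the algorithm is correct by establishing two things: that it terminates, and that upon termination the set $R$ it has constructed satisfies $\sem{\bigwedge R}_\mathcal{L}$ is the weakest symbolic bisimulation among those entailed by $I$, so that the final check $\phi \vDash \bigwedge R$ returns {\True} exactly when the output specification holds. By \cref{lemma:symbolic-bisimulation-sound-complete}, language equivalence for all pairs related by $\phi$ is equivalent to the existence of a symbolic bisimulation $\psi$ with $\phi \vDash \psi$, so it suffices to show that $\bigwedge R$ is such a bisimulation and that it is weak enough to be entailed by $\phi$ whenever \emph{any} suitable bisimulation exists.

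First I would argue \textbf{termination}. The frontier $T$ is seeded with $I$ and grows only by adding $\WP(\psi)$ for freshly-accepted conjuncts $\psi$. The key observation is that a formula is only added to $R$ when it is \emph{not} already entailed by $\bigwedge R$, so each addition strictly shrinks the relation $\sem{\bigwedge R}_\mathcal{L}$. Since $C$ is finite, the relation $\sem{\bigwedge R}_\mathcal{L} \subseteq C \times C$ can strictly decrease only finitely many times; hence $R$ stabilizes. I would need to check that the supply of distinct formulas reachable via $\WP$ iteration is, up to semantic equivalence, finite — this follows because $\vDash$ is decidable and relations on the finite set $C$ form a finite lattice, so only finitely many semantically distinct conjuncts can ever be accepted into $R$.

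Second, for \textbf{partial correctness}, I would establish the loop invariant that at every iteration $\sem{\bigwedge R}_\mathcal{L} \supseteq B$, where $B$ is the weakest symbolic bisimulation entailed by all of $I$ (equivalently, the largest bisimulation refining the acceptance partition, restricted to what the initial conditions permit). Using the defining property of $I$ (that $\forall \psi \in I.\, c_1 \sem{\psi} c_2$ iff $c_1,c_2$ agree on membership in $F$) and the defining property of $\WP$ (that $c_1 \sem{\bigwedge \WP(\psi)} c_2$ iff both successors under every $b$ are related by $\psi$), I would show that when $T$ is empty, $\bigwedge R$ is closed under steps and respects acceptance, hence $\sem{\bigwedge R}_\mathcal{L}$ is genuinely a bisimulation; and conversely that no pair belonging to $B$ is ever excluded, because every conjunct added to $R$ is a consequence of $B$ via the $I$- and $\WP$-characterizations. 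Combining these, $\bigwedge R$ is the weakest symbolic bisimulation, so the final test $\phi \vDash \bigwedge R$ succeeds precisely when $\phi$ is entailed by some symbolic bisimulation, which by \cref{lemma:symbolic-bisimulation-sound-complete} is equivalent to the output specification $L(c_1) = L(c_2)$ for all $\phi$-related pairs.

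The main obstacle I anticipate is the termination argument, specifically justifying that the $\WP$-closure process does not generate an unbounded stream of semantically fresh formulas that keep getting added to $R$. The monotonicity of $\sem{\bigwedge R}_\mathcal{L}$ gives a well-founded measure on the \emph{semantic} side, but one must be careful that the algorithm's entailment check on line~\ref{line:entailment} is performed against $\sem{\cdot}_\mathcal{L}$ and therefore correctly recognizes semantic redundancy — so that syntactically distinct but semantically entailed formulas are discarded rather than reintroduced. Tying the syntactic worklist dynamics to the finite semantic lattice is the delicate point; once that correspondence is nailed down, both termination and the invariant follow routinely.
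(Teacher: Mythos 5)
Your overall strategy coincides with the paper's: termination via the strictly shrinking relation $\sem{\bigwedge R}_\mathcal{L}$ inside the finite set $C \times C$, partial correctness by showing that on termination $\bigwedge R$ is the \emph{weakest} symbolic bisimulation, and a final appeal to \cref{lemma:symbolic-bisimulation-sound-complete}. However, there is a concrete gap in how you set up the invariants: they must be stated over $\bigwedge R \wedge \bigwedge T$, not over $\bigwedge R$ alone. Your invariant $\sem{\bigwedge R}_\mathcal{L} \supseteq B$ is not inductive as written --- when a formula $\psi$ popped from $T$ is added to $R$, you need to already know $B \vDash \psi$, and that is exactly the information an invariant covering $T$ carries (the paper's version: every symbolic bisimulation $\rho$ satisfies $\rho \vDash \bigwedge R \wedge \bigwedge T$). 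More seriously, your claim that ``when $T$ is empty, $\bigwedge R$ is closed under steps and respects acceptance'' does not follow from the weakest-ness invariant; it needs its own loop invariants, again over $R$ and $T$ jointly: (i) pairs related by $\bigwedge R \wedge \bigwedge T$ are equally accepting, and (ii) pairs related by $\bigwedge R \wedge \bigwedge T$ step into pairs related by $\bigwedge R$. At termination these specialize to ``$\bigwedge R$ is a symbolic bisimulation.'' Without tracking $T$, formulas that were skipped because they were entailed at the time leave no trace, and you cannot reconstruct step-closure of the final $R$ from the properties of $I$ and $\WP$ alone.

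On termination, ``hence $R$ stabilizes'' is not yet a proof that the loop halts: after $R$ stops growing you still need $T$ to drain, which holds because $T$ only receives new formulas in iterations where $R$ grows. The clean statement is a lexicographic measure, as in the paper: each iteration either strictly shrinks $\sem{\bigwedge R}_\mathcal{L}$, or leaves it unchanged and strictly shrinks $T$. Once this is in place, the ``delicate point'' you flag at the end is a non-issue: there is no need to worry about $\WP$ generating unboundedly many semantically fresh formulas, because each acceptance into $R$ strictly shrinks a subset of the finite set $C \times C$, bounding the number of acceptances (and hence of additions to $T$) by $|C|^2$ outright; and the entailment check on line~\ref{line:entailment} is semantic by definition of $\vDash$, so ``recognizing semantic redundancy'' is automatic rather than something to verify.
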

\begin{proof}[Proof Sketch]
For termination, note that in each iteration either $\sem{\bigwedge R}$ or $T$ shrinks; hence, the algorithm must terminate.

For partial correctness, one can show the following invariants:
  (1)~if $\phi$ is a symbolic bisimulation, then $\phi \vDash \bigwedge R \wedge \bigwedge T$; and
  (2)~configurations related by $\bigwedge R \wedge \bigwedge T$ are equally accepting, and
  (3)~configurations related by $\bigwedge R \wedge \bigwedge T$ step into configurations related by $\bigwedge R$.
Thus, when \cref{algorithm:symbolic-refinement} terminates, $\bigwedge R$ must be the largest symbolic bisimulation, and we can conclude by applying \cref{lemma:symbolic-bisimulation-sound-complete}.
\end{proof}

\subsection{Instantiating the Parameters}
We now sketch how we instantiate the parameters of \cref{algorithm:symbolic-refinement}; the details are worked out in our Coq development.

For $\WP$, the main idea is to focus on a particular subclass of formulas.
First, we isolate assertions about the current state; this lets us calculate weakest preconditions on a state-by-state basis, by means of a traditional substitution-based procedure on the formula using the associated program text.
Second, we isolate statements about buffer lengths; this means that when a formula in our algorithm makes a claim about the buffer contents, it does so in a context where the buffer length is known.
This simplifies the analysis and generation of formulas, because we do not have to cover cases where slices go beyond the end of a bitvector.

Concretely, this format takes the following form.

\begin{definition}[Templates]
A \emph{template} is a pair $\angl{q, n} \in (Q \cup \{ \acc, \rej \}) \times \naturals$ where $n < \opsize{\Sop(q)}$ if $q \in Q$, and $n = 0$ otherwise.
The set of all templates is $T$.
When $t = \angl{q, n}$ and ${\lessgtr} \in \{ <, > \}$, we write $t^\lessgtr$ as shorthand for $q^\lessgtr \wedge n^\lessgtr$.

A formula $\phi$ is \emph{pure} when it does not contain state or buffer length assertions; $\phi$ is \emph{template-guarded} if it is of the form $t_1^< \wedge t_2^> \implies \psi$ where $t_1, t_2 \in T$ and $\psi$ is pure.
\end{definition}

Let $\phi$ be template-guarded.
We compute $\WP(\phi)$ by operating on the left- and right hand side, giving rise to functions $\WP^<$ and $\WP^>$, whose definitions we omit.
Each takes a formula and a pair of state templates, as well as a fresh variable $\Svar \in \Var$ to represent the bit to be read, and returns a formula.
The relevant correctness statement is as follows.

\begin{lemma}%
\label{lemma:wp-asymmetric-correctness}
Let $\phi$ be pure, let $\Svar \in \Var$ be fresh for $\phi$, and let $c^<, c^> \in C$ as well as $t \in T$.
The following are equivalent:
\begin{enumerate}
    \item For all $b \in \bits$, we have $\delta(c^<, b) \sem{t^< \implies \phi}_\mathcal{L} c^>$.
    \item For all $t' \in T$, $c^< \sem{t'^< \implies \WP^<(\phi, t', t, x)}_\mathcal{L} c^>$.
\end{enumerate}
A similar equivalence holds for $\WP^>$.
Furthermore, if $\phi$ is pure, then so are $\WP^<(\phi, x, t', t)$ and $\WP^>(\phi, x, t', t)$.
\end{lemma}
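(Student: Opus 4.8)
The plan is to unfold the relational semantics $\sem{\cdot}_\mathcal{L}$ into its pointwise form and reduce both statements to a claim about the single template inhabited by $c^<$. Write $c^< = \angl{q', s', w'}$ and let $t'_0 = \angl{q', |w'|}$ be that unique template. On the right-hand side, the guard ${t'}^<$ in ${t'}^< \implies \WP^<(\phi, t', t, x)$ is satisfied by $c^<$ only when $t' = t'_0$; for every other $t'$ the implication holds vacuously. Hence statement~(2) is equivalent to $c^< \sem{\WP^<(\phi, t'_0, t, x)}_\mathcal{L} c^>$. The fresh variable $x$ is precisely what lets the outer quantifier over bits $b$ in~(1) line up with the universal quantifier over valuations $\sigma$ built into $\sem{\cdot}_\mathcal{L}$: since $\sigma(x) \in \bits$ ranges over both bits and $x$ does not occur in $\phi$, reading $b = \sigma(x)$ identifies ``for all $b$'' with ``for all $\sigma$''.

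With this reduction in hand, I would case-split on the behaviour of $\delta(c^<, b)$, which is determined by $t'_0$ together with the target template $t$, and observe that the definition of $\WP^<$ mirrors this split exactly. When $q' \in Q$ and $|w'| + 1 < \opsize{\Sop(q')}$, the step merely appends the new bit, $\delta(c^<, b) = \angl{q', s', w'b}$, landing in template $\angl{q', |w'|+1}$; if $t$ is this template then $\WP^<(\phi, t'_0, t, x)$ is $\phi$ with $\Sbbuf{<}$ replaced by $\Sconcat{\Sbbuf{<}}{x}$ and the headers untouched, and otherwise it is $\top$ (that is, $\bot \implies \bot$) because the step can never reach $t$. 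When $|w'| + 1 = \opsize{\Sop(q')}$, the buffer fills, the operation block runs, and the transition fires, so $\delta(c^<, b) = \angl{\semtrans{\Strans(q')}(s''), s'', \epsilon}$ with $\semop{\Sop(q')}(s', w'b) = \angl{s'', \epsilon}$; here $\WP^<(\phi, t'_0, t, x)$ symbolically threads the store update performed by $\Sop(q')$ through $\phi$ and conjoins a pure guard asserting that $\Strans(q')$ selects the state named in $t$. The accept/reject source states and all remaining template mismatches collapse to $\top$.

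The technical heart is a substitution lemma: for any $\sigma$ with $\sigma(x) = b$, whether $\delta(c^<, b) \lsem{\phi}{\sigma} c^>$ holds agrees with whether $c^< \lsem{\WP^<(\phi, t'_0, t, x)}{\sigma} c^>$ holds, under the proviso that $\delta(c^<,b)$ matches $t$. For the buffer-growing case this is a routine induction over the expression grammar of \cref{fig:confrel-syntax}, using that the only change on the left is replacing $w'$ by $w'b$. The main obstacle is the transition-firing case: because $\semtrans{\Strans(q')}$ branches on the store, the weakest precondition cannot name a single successor, so it must encode the select logic as a pure disjunction over the patterns that lead to the state named in $t$, and must track how $\semop{\Sop(q')}$ rewrites each header via $\Sextract{h}$ and $\Sassign{h}{\Sexp}$. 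I would discharge this with two auxiliary inductions: one over the operation syntax (extract, assignment, sequencing) showing that the symbolic store update matches $\semop{\cdot}$, and one over the transition syntax showing that the generated guard matches $\semtrans{\cdot}$, appealing to \cref{def:expression-semantics} and the pattern semantics $\sempat{\cdot}$. Composing these with the base substitution lemma yields the stated equivalence.

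Finally, the purity claim follows by inspection of the same case analysis: $\WP^<$ only ever substitutes bitvector expressions for $\Sbbuf{<}$ and $\Sbhdr{<}$ and conjoins guards built from bitvector equalities (the select conditions), and it never introduces a state or buffer-length assertion, so a pure $\phi$ yields a pure result. The equivalence for $\WP^>$ is obtained by the identical argument with the roles of the $<$ and $>$ superscripts exchanged, since $\delta$ and the semantics in \cref{fig:confrel-syntax} are defined symmetrically in the two configurations.
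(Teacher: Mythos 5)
There is no textual proof to compare against here: the paper omits both the definitions of $\WP^<$ and $\WP^>$ and the proof of this lemma, deferring them to its Coq development, so your proposal must be judged on its own merits. Its architecture is sound, and matches what the lemma's role demands: reducing statement (2) to the unique template $\floor{c^<}$ inhabited by $c^<$, using freshness of $x$ to identify the quantifier over bits $b$ with the quantifier over valuations $\sigma$, splitting cases exactly along the definition of $\delta$, and obtaining purity by inspection of the construction are all the right moves, and your treatment of the buffer-growing case is correct.

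The gap is in the transition-firing case. You say that $\WP^<(\phi, t_0', t, x)$ ``threads the store update through $\phi$ and \emph{conjoins} a pure guard asserting that $\Strans(q')$ selects the state named in $t$,'' and your substitution lemma is stated only ``under the proviso that $\delta(c^<,b)$ matches $t$.'' Neither addresses what $\WP^<$ must yield when the step does \emph{not} land in $t$ --- and unlike the buffer-growing case, this cannot be resolved statically to $\top$, because the successor state $\semtrans{\Strans(q')}(s'')$ depends on the store and the freshly read bit. Statement (1) treats such steps \emph{vacuously}: $\delta(c^<,b) \sem{t^< \implies \phi}_\mathcal{L} c^>$ holds automatically whenever $\floor{\delta(c^<,b)} \neq t$. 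Hence the select guard must appear as the \emph{antecedent of an implication} in $\WP^<$, not as a conjunct. With the conjunction you describe, direction (1)$\Rightarrow$(2) fails: take $c^< = \angl{\mathtt{q1}, s, w}$ in the reference MPLS parser with $|w| = 31$ and bit $23$ of $w$ equal to $0$, and $t = \angl{\mathtt{q2}, 0}$; then for every $b$ the step goes to $\mathtt{q1}$, so (1) holds vacuously, yet the conjoined guard ``the select reaches $\mathtt{q2}$'' is false at $c^<$, so (2) fails. The repair is easy --- replace $\wedge$ by $\implies$ and add to the substitution lemma the clause that $c^< \lsem{\WP^<(\phi,t_0',t,x)}{\sigma} c^>$ holds whenever $\floor{\delta(c^<,\sigma(x))} \neq t$ --- but as written this is the one point where the proof does not go through. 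A smaller omission in the same case: after the transition fires, the left buffer is $\epsilon$, so occurrences of $\Sbbuf{<}$ in $\phi$ must also be substituted away, which your sketch never mentions.
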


Using $\WP^<$ and $\WP^>$, we can then provide a version of $\WP$ that acts on and returns template-guarded formulas.
Its definition and correctness statement is as follows.

\begin{lemma}%
\label{lemma:wp-correctness}
Let $t_1^< \wedge t_2^> \implies \phi$ be template-guarded, and let $\Svar$ be fresh in $\phi$.
Define $\WP(t_1^< \wedge t_2^> \implies \phi)$ as the smallest set satisfying the following rule, for all $t_1', t_2' \in T$:
\[
    \inferrule{%
        \phi' = \WP^<(\WP^>(\phi', x, t_2', t_2), x, t_1', t_1) \\
    }{%
        [t_1'^< \wedge t_2'^> \implies \phi'] \in \WP(t_1^< \wedge t_2^> \implies \phi)
    }
\]
Now $\WP$ fits the requirement from \cref{algorithm:symbolic-refinement}, when restricted to template-guarded formulas.
Moreover, each of the formulas in $\WP(t_1^< \wedge t_2^> \implies \phi)$ is template-guarded.
\end{lemma}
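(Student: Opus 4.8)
\emph{Proof plan.} The plan is to derive the two-sided weakest-precondition characterization required by \cref{algorithm:symbolic-refinement} by composing the two one-sided characterizations supplied by \cref{lemma:wp-asymmetric-correctness}, one per automaton. Fix configurations $c^<, c^> \in C$. Unfolding the template-guarded semantics, the left-hand side of the desired equivalence says that for every bit $b \in \bits$, whenever $\delta(c^<, b)$ lies in template $t_1$ and $\delta(c^>, b)$ lies in template $t_2$, the pure formula $\phi$ holds of the stepped pair $\angl{\delta(c^<, b), \delta(c^>, b)}$. First I would rewrite this so that the quantifier over $b$ is exposed and the right-hand step is isolated, and then eliminate the two steps in turn.

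Concretely, I would first apply the $\WP^>$ direction of \cref{lemma:wp-asymmetric-correctness} to absorb the step on the right configuration, introducing the bit variable $x$ and parameterizing over the pre-step right template $t_2' \in T$; this replaces the guard $t_2^>$ on the stepped configuration by a guard $t_2'^>$ on $c^>$ and rewrites $\phi$ into $\WP^>(\phi, x, t_2', t_2)$. Then I would apply the $\WP^<$ direction to the resulting formula to absorb the step on the left configuration, parameterizing over $t_1' \in T$ and---crucially---reusing the same variable $x$. Because $\sem{\cdot}_\mathcal{L}$ quantifies universally over all valuations $\sigma : \Var \to \bits$, reusing $x$ forces the two automata to read the same bit, which is exactly the coupling demanded by the two-sided step in \cref{algorithm:symbolic-refinement}. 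Collecting the results over all pairs $(t_1', t_2') \in T \times T$ yields precisely the conjunction $\bigwedge \WP(t_1^< \wedge t_2^> \implies \phi)$, establishing the equivalence. Template-guardedness of the output is then immediate: $\phi$ is pure by hypothesis, and by the purity-preservation clause of \cref{lemma:wp-asymmetric-correctness} both $\WP^>(\phi, x, t_2', t_2)$ and the subsequent application of $\WP^<$ remain pure, so each $t_1'^< \wedge t_2'^> \implies \phi'$ has the required template-guarded shape.

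The main obstacle is soundly composing the two single-sided lemmas, and it splits into two delicate points. The first is the bit coupling: each application of \cref{lemma:wp-asymmetric-correctness} quantifies the stepping bit independently, so naively chaining them would yield an erroneous ``for all $b_1$, for all $b_2$'' semantics rather than the intended ``for all $b$'' with a single shared bit. Threading one variable $x$ through both applications and appealing to the universal-over-valuations reading of $\sem{\cdot}_\mathcal{L}$ is what repairs this. The second is freshness bookkeeping: the $\WP^<$ application is fed the formula $\WP^>(\phi, x, t_2', t_2)$, which already mentions $x$, whereas \cref{lemma:wp-asymmetric-correctness} is stated for an input in which $x$ is fresh. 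I would discharge this either by first running the two applications with distinct fresh variables and only afterwards specializing to a common variable, or---more directly---by proving a mild generalization of \cref{lemma:wp-asymmetric-correctness} in which the pulled-back formula is permitted to mention the read bit, which is the form the Coq development actually requires.
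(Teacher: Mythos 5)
Your proposal cannot be checked against a paper proof for a simple reason: the paper does not prove \cref{lemma:wp-correctness} (nor \cref{lemma:wp-asymmetric-correctness}) anywhere---the surrounding text explicitly defers ("the details are worked out in our Coq development"), and the appendix contains proofs only for the soundness/completeness lemma, the two algorithm-correctness theorems, and the leaps lemma. Judged on its own terms, your plan is sound, and it correctly isolates the two places where the obvious argument breaks. First, you silently repair the statement's typo by reading the rule as $\phi' = \WP^<(\WP^>(\phi, x, t_2', t_2), x, t_1', t_1)$ (the printed version has $\phi'$ on both sides), which is clearly the intended definition. Second, and more importantly, you identify that \cref{lemma:wp-asymmetric-correctness} \emph{as stated} cannot be composed as a black box: each application carries its own universally quantified stepping bit, so naive chaining characterizes the off-diagonal property ``for all $b_1, b_2$'' rather than the diagonal ``for all $b$'' demanded by \cref{algorithm:symbolic-refinement}; this would make $\bigwedge \WP(\psi)$ strictly stronger than the weakest precondition and break the left-to-right direction of the required equivalence. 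Your diagnosis that the shared variable $x$, read through the universal-over-valuations semantics of $\sem{\cdot}_\mathcal{L}$, is what restores the coupling is exactly right, as is your observation that the outer $\WP^<$ application violates the freshness hypothesis because its argument $\WP^>(\phi, x, t_2', t_2)$ already mentions $x$. Of your two proposed repairs, the second (a valuation-parametric generalization of the one-sided lemma in which the pulled-back formula may mention the read bit) is the one to commit to: your first alternative---distinct fresh variables followed by identification---still needs that same per-valuation characterization plus a substitution lemma to justify the final specialization, since the as-stated lemma only speaks about formulas under an outer universal quantifier over valuations. One further wrinkle your plan should absorb into the generalized lemma: the input formula $t_1^< \wedge t_2^> \implies \phi$ guards \emph{both} sides, and a guard on the non-stepping side (e.g., $t_1^<$ during the $\WP^>$ pass) is not pure, so it cannot be smuggled into the pure formula either; the generalized one-sided statement must carry the opposite-side template as a side condition. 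Your template-guardedness argument via the purity-preservation clause is correct. In short: a correct plan whose real content is the strengthened one-sided lemma, which you should state and prove explicitly, since neither the printed \cref{lemma:wp-asymmetric-correctness} nor anything else in the paper supplies it.
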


By the latter property, if all formulas in $I$ are template-guarded, then the formulas in $R$ and $T$ remain template-guarded.
We thus instantiate $I$ as a set of template-guarded formulas that rule out pairs containing both accepting and non-accepting configurations, as follows.

\begin{lemma}
Let $t_\acc = \angl{\acc, 0}$.
Define $I$ as the smallest set of formulas satisfying the following rule:
\[
    \inferrule{%
        t_1, t_2 \in T \\
        t_1 = t_\acc \iff t_2 \neq t_\acc
    }{%
        [t_1^< \wedge t_2^> \implies \bot] \in I
    }
\]
Now $I$ fits the requirement from \cref{algorithm:symbolic-refinement}.
\end{lemma}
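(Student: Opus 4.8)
The plan is to unfold the defining requirement for $I$ stated as the second input to \cref{algorithm:symbolic-refinement} and reduce it to a finite statement about templates. Fix $c_1, c_2 \in C$; I must establish
\[
  \bigl[\forall \psi \in I.\ c_1 \sem{\psi}_\mathcal{L} c_2\bigr] \Leftrightarrow \bigl[c_1 \in F \Leftrightarrow c_2 \in F\bigr].
\]
The first observation I would record is that every configuration matches exactly one template \emph{on each side}: for $t = \angl{q, n}$, the formula $t^<$ abbreviates $q^< \wedge n^<$, and by the semantics of the state- and buffer-length predicates, $c_1 \sem{t^<}_\mathcal{L} c_2$ holds precisely when $c_1$ is in state $q$ and its buffer has length $n$. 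Writing $\tau(c) = \angl{q, |w|}$ for $c = \angl{q, s, w}$, this means $c_1 \sem{t^<}_\mathcal{L} c_2$ iff $t = \tau(c_1)$, and symmetrically for $t^>$. The well-formedness conditions on configurations guarantee $\tau(c) \in T$: when $q \in Q$ we have $|w| < \opsize{\Sop(q)}$, and when $q \in \{ \acc, \rej \}$ we have $w = \epsilon$, so $|w| = 0$. I would then note the key reformulation of acceptance: since $\acc \notin Q$ forces the buffer empty, $c \in F$ if and only if $\tau(c) = t_\acc$.

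Next I would analyze a single conjunct of $I$. By the semantics of $\implies$ and $\bot$, a formula $t_1^< \wedge t_2^> \implies \bot$ is satisfied by $(c_1, c_2)$ exactly when it is \emph{not} the case that $c_1$ matches $t_1$ on the left and $c_2$ matches $t_2$ on the right; by the uniqueness of matching templates, this fails precisely when $t_1 = \tau(c_1)$ and $t_2 = \tau(c_2)$. Consequently the conjunction over all $\psi \in I$ is \emph{violated} by $(c_1, c_2)$ if and only if the particular pair $\angl{\tau(c_1), \tau(c_2)}$ is an instance of the rule generating $I$, i.e.\ satisfies the side condition $\tau(c_1) = t_\acc \iff \tau(c_2) \neq t_\acc$, which encodes exactly that one of $\tau(c_1), \tau(c_2)$ equals $t_\acc$ and the other does not. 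Crucially, $\angl{\tau(c_1), \tau(c_2)}$ is always an eligible instance, since both templates lie in $T$ by the observation above.

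Combining these, the left-hand side of the equivalence fails iff exactly one of $\tau(c_1), \tau(c_2)$ equals $t_\acc$, iff (by the reformulation of $F$) exactly one of $c_1, c_2$ lies in $F$, iff $[c_1 \in F \Leftrightarrow c_2 \in F]$ is false. Taking contrapositives yields the claim. The reasoning is a straightforward finite case split, so I do not expect a serious obstacle; the only points demanding care are making the implicit side conditions in the semantics of the state assertions $q^\lessgtr$ and the length assertions $n^\lessgtr$ precise, so that ``matching a template'' is genuinely a function $\tau$, and verifying $\tau(c) \in T$ for every configuration, which is what ensures the violating instance we need actually occurs in $I$.
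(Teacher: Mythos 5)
Your proof is correct. Note that the paper never actually proves this lemma in the text---it explicitly defers the instantiation details to the Coq development, and the appendix contains no proof of it---so there is no in-paper argument to compare against; but your proof uses exactly the ingredients the paper relies on where it does reason about $I$: your map $\tau$ is the paper's template map $\floor{\cdot}$ (introduced in \cref{sec:optimizing} as ``the unique template describing $c$''), and your two key observations---that $c \in F \iff \floor{c} = t_\acc$ and that a formula $t_1^< \wedge t_2^> \implies \bot$ is falsified by $(c_1,c_2)$ precisely when $t_1 = \floor{c_1}$ and $t_2 = \floor{c_2}$---are precisely how the appendix proof of \cref{theorem:symbolic-refinement-initialize-correct} argues its second loop invariant about the (reachability-refined) initial set. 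Your write-up is thus the natural elaboration of what the paper leaves implicit, and the points you flag as needing care (template matching being functional, and $\floor{c} \in T$ for every well-formed configuration, which guarantees the falsifying instance really lies in $I$) are indeed the only places where the argument could silently go wrong.
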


\section{Optimizing the Algorithm}%
\label{sec:optimizing}

We now discuss two optimizations of \cref{algorithm:symbolic-refinement}.
The first optimization refines $\WP$ and $I$ such that fewer entailments between formulas need to be checked (line~\ref{line:entailment}).
The second optimization generalizes $\WP$ to compute multi-step weakest preconditions, thereby strengthening the approximation of the weakest symbolic bisimulation more quickly.

\subsection{Abstract Interpretation}

\cref{algorithm:symbolic-refinement} computes the weakest symbolic bisimulation, which relates \emph{all} language equivalent configurations, but it cares only about the configurations related by $\phi$.
We can compute a symbolic bisimulation more loosely, by disregarding unreachable (and hence, irrelevant) configuration pairs.

\begin{example}
Recall the symbolic bisimulation in \cref{ex:symbolic-bisimulation}, which was sufficient to conclude language equivalence of related configurations.
There was no need to compute the largest symbolic bisimulation, which involves many configuration pairs unreachable from the pairs of interest.
\end{example}

Of course, computing the set of reachable pairs---even symbolically---is tantamount to checking equivalence.
Instead, we approximate it by analyzing the P4A to capture the pairs of reachable configurations based on their templates.

To this end, let $\rho(\Strans)$ denote the set of states appearing in a transaction block $\Strans$.
We define $\sigma: T \to 2^T$ as follows:
\[
    \sigma(q, n) =
        \begin{cases}
        \{ \angl{q, n+1} \} & q \in Q \wedge n + 1 < \sfunc(q) \\
        \rho(\Strans(q)) \times \{0 \} & q \in Q \wedge n + 1 = \sfunc(q) \\
        \{ \angl{\rej, 0} \} & q \in \{ \acc, \rej \}
        \end{cases}
\]

When $c = \angl{q, s, w} \in C$, write $\floor{c}$ for $\angl{q, |w|} \in T$, i.e., the unique template describing $c$.
One can show that for all $c \in C$ and $b \in \bits$, we have $\floor{\delta(c, b)} \in \sigma(\floor{c})$.
In a sense, this makes $\sigma$ an abstract interpretation of $\delta$.

Given a formula $\phi$, we define $\reach_\phi$ as the smallest relation on $T$ satisfying the following rules:
\begin{mathpar}
\inferrule{%
    c_1 \sem{\phi}_\mathcal{L} c_2
}{%
    \floor{c_1} \mathrel{\reach_\phi} \floor{c_2}
}
\and
\inferrule{%
    t_1 \mathrel{\reach_\phi} t_2 \\
}{%
    \sigma(t_1) \times \sigma(t_2) \subseteq \reach_\phi
}
\and
\end{mathpar}
Usually, the pairs generated by the first rule can be inferred from $\phi$.
For instance, if we want to compare the languages of two initial states $q_1$ and $q_2$, then $\phi = q_1^< \wedge 0^< \wedge q_2^> \wedge 0^>$, and so the sole instantiation of the first rule yields $\angl{q_1, 0} \mathrel{\reach_\phi} \angl{q_2, 0}$.
Computing the full contents of $\reach_\phi$ is then a matter of applying the second rule until a fixpoint is reached.

\begin{restatable}{theorem}{restatesymbolicrefinementinitializecorrect}%
\label{theorem:symbolic-refinement-initialize-correct}
Let $\phi$ be a formula.
\cref{algorithm:symbolic-refinement} remains correct for this $\phi$ if we set $I$ to the smallest set satisfying the rule
\[
    \inferrule{%
        t_1 \mathrel{\reach_\phi} t_2 \\
        t_1 = t_\acc \iff t_2 \neq t_\acc
    }{%
        [t_1^< \wedge t_2^> \implies \bot] \in I
    }
\]
and for each template-guarded formula $t_1^< \wedge t_2^> \implies \psi$ we set $\WP(t_1^< \wedge t_2^> \implies \psi)$ to the smallest set satisfying the rule
\[
    \inferrule{%
        t_1' \mathrel{\reach_\phi} t_2' \\
        \psi' = \WP^<(\WP^>(\psi', x, t_2', t_2), x, t_1', t_1) \\
    }{%
        [t_1'^< \wedge t_2'^> \implies \phi'] \in \WP(t_1^< \wedge t_2^> \implies \phi)
    }
\]
where $\Svar \in \Var$ is some variable that is fresh for $\psi$.
\end{restatable}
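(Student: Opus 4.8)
The plan is to re-derive the correctness of \cref{algorithm:symbolic-refinement} (as in \cref{theorem:symbolic-refinement-correct}) while relativizing every reasoning step about configurations to the pairs tracked by $\reach_\phi$. The conceptual key is that $\reach_\phi$ soundly over-approximates the templates of all configurations reachable from $\phi$-related pairs when both sides are stepped by the same input, so the template pairs dropped by the restricted $I$ and $\WP$ are precisely the ones the algorithm never needs to inspect for this particular $\phi$.

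First I would establish a reachability soundness lemma: if $c_1 \sem{\phi}_\mathcal{L} c_2$ and $w \in \bits^*$, then $\floor{\delta^*(c_1, w)} \mathrel{\reach_\phi} \floor{\delta^*(c_2, w)}$. The proof is by induction on $|w|$. The base case $w = \epsilon$ is exactly the first rule generating $\reach_\phi$. For the inductive step, the already-established fact $\floor{\delta(c,b)} \in \sigma(\floor{c})$ together with the closure rule $t_1 \mathrel{\reach_\phi} t_2 \implies \sigma(t_1) \times \sigma(t_2) \subseteq \reach_\phi$ shows that a single synchronous step preserves $\reach_\phi$-relatedness of templates; the claim then follows from the induction hypothesis. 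As a by-product, this lemma shows that stepping any $\reach_\phi$-related pair again yields a $\reach_\phi$-related pair, which is what makes the weakenings below sound.

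Next I would package this into a \emph{reachability-restricted} bisimulation principle that strengthens \cref{lemma:symbolic-bisimulation-sound-complete}: if a relation $R'$ contains $\sem{\phi}_\mathcal{L}$ and satisfies, for every $c_1 \mathrel{R'} c_2$ with $\floor{c_1} \mathrel{\reach_\phi} \floor{c_2}$, both $[c_1 \in F \iff c_2 \in F]$ and $\delta(c_1,b) \mathrel{R'} \delta(c_2,b)$ for all $b \in \bits$, then $L(c_1) = L(c_2)$ for all $c_1 \sem{\phi}_\mathcal{L} c_2$. This follows from the reachability lemma: each descendant pair $\delta^*(c_1,w), \delta^*(c_2,w)$ stays $\reach_\phi$-related and (by induction on $w$) stays in $R'$, hence is equally accepting, giving $w \in L(c_1) \iff w \in L(c_2)$. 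With this principle in hand I would re-run the invariant argument of \cref{theorem:symbolic-refinement-correct} verbatim, except that invariants (2) and (3) are now asserted only for $\reach_\phi$-related configurations. The two facts needed to drive that argument are restricted specifications for the new parameters: the restricted $I$ rules out exactly the reachable template pairs whose accepting status disagrees, so on $\reach_\phi$-related configurations it still characterizes equal acceptance; and the restricted $\WP$ coincides with the full $\WP$ of \cref{lemma:wp-correctness} on $\reach_\phi$-related pairs. At termination $\bigwedge R$ is therefore a reachability-restricted bisimulation, the algorithm has checked $\phi \vDash \bigwedge R$, and the principle concludes.

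The main obstacle is the second restricted specification, namely that dropping the conjuncts $[{t_1'}^{<} \wedge {t_2'}^{>} \implies \phi']$ with $\neg(t_1' \mathrel{\reach_\phi} t_2')$ from $\WP$ is invisible on the reachable fragment. Concretely, for $c_1, c_2$ with $\floor{c_1} = t_1$, $\floor{c_2} = t_2$ and $t_1 \mathrel{\reach_\phi} t_2$, I must show that every successor template pair $(\floor{\delta(c_1,b)}, \floor{\delta(c_2,b)})$ lies in $\sigma(t_1) \times \sigma(t_2) \subseteq \reach_\phi$, so that the template guard of each omitted conjunct is unsatisfiable at $c_1, c_2$ and its removal cannot weaken the biconditional defining $\WP$. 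Threading the abstract interpretation $\sigma$ through the concrete step $\delta$ at exactly this point—so that the weakening of $\WP$ is provably inert on reachable pairs while still matching \cref{lemma:wp-correctness} there—is the delicate part of the proof.
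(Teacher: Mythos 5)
Your proposal is essentially correct, and its skeleton---re-running the loop-invariant proof of \cref{theorem:symbolic-refinement-correct} with the acceptance and entailment invariants relativized to $\reach_\phi$-related pairs---is exactly what the paper does. Where you genuinely diverge is in how the argument closes. You prove a new ``bisimulation up to reachability'' soundness principle (a relation containing $\sem{\phi}_\mathcal{L}$ whose acceptance and step conditions are required only at $\reach_\phi$-related pairs still certifies language equivalence) and apply it to $\sem{\bigwedge R}_\mathcal{L}$ at termination. The paper instead internalizes reachability into the logic: it sets $\psi = \bigvee_{t_1 \mathrel{\reach_\phi} t_2} (t_1^< \wedge t_2^>)$, argues that $\psi \wedge \bigwedge R$ is an honest, \emph{unrestricted} symbolic bisimulation ($\sem{\psi}_\mathcal{L}$ is preserved by $\delta$ by construction of $\reach_\phi$, and $\sem{\bigwedge R}_\mathcal{L}$ by the first invariant), and then reuses \cref{lemma:symbolic-bisimulation-sound-complete} as-is. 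Your route yields a reusable relativized proof principle and avoids expressing $\reach_\phi$ as a formula, at the cost of redoing the word-induction behind \cref{lemma:symbolic-bisimulation-sound-complete}; the paper's route reuses existing metatheory but leans on $\reach_\phi$ being finite and hence definable as a (large) disjunction in the logic.

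Two small repairs to your write-up. First, in your ``main obstacle'' paragraph the causal link is off: the omitted conjuncts of $\WP$ are inert at a $\reach_\phi$-related pair $(c_1,c_2)$ simply because their guards name non-$\reach_\phi$-related template pairs, which therefore cannot equal $(\floor{c_1},\floor{c_2})$; the fact that successor templates land in $\sigma(\floor{c_1}) \times \sigma(\floor{c_2}) \subseteq \reach_\phi$ is needed for a different purpose, namely to keep the restricted invariants applicable after a step (and in your reachability lemma). Second, you keep invariant (1) unrestricted while restricting (2) and (3); with the restricted $\WP$ only the reach-relativized version of (1) is actually provable---for a pair whose templates are not $\reach_\phi$-related, every guard in $R \cup T$ is vacuously satisfied, yet its successors may land on guarded templates and violate them---and the relativized version is also all that your final step (and the paper's) needs. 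The paper shares this imprecision, asserting (1) unrestricted and calling its preservation ``completely analogous,'' so this is a fix to both write-ups rather than a gap unique to yours.
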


\subsection{Leaps and Bounds}

\cref{algorithm:symbolic-refinement} operates on a bit-by-bit basis.
However, most steps just fill up the buffer, and do not affect the state or store.
We exploit this to compute a different form of weakest precondition, which takes as many steps as necessary to execute a ``real'' state-to-state transition in the P4A\@.

The following auxiliary notion allows us to compute the number of steps until the next transition.

\begin{definition}[Leap Size]
Let $c_1, c_2 \in C$ and $c_i = \angl{q_i, s_i, w_i}$; we define the \emph{leap size} $\sharp(c_1, c_2) \in \naturals$ as follows:
\[
    \sharp(c_1, c_2) =
        \begin{cases}
        1 & q_1, q_2 \not\in Q \\
        \opsize{\Strans(q_1)} - |w_1| & q_1 \in Q, q_2 \not\in Q \\
        \opsize{\Strans(q_2)} - |w_2| & q_1 \not\in Q, q_2 \in Q \\
        \!\!
        \begin{array}{l}
        \min(\opsize{\Strans(q_1)} - |w_1|, \phantom{)} \\ 
        \phantom{\min(} \opsize{\Strans(q_2)} - |w_2|)    
        \end{array}
        & \ensuremath{q_1, q_2 \in Q}
        \end{cases}
\]
\end{definition}

We can use leap size to define a notion of (symbolic) bisimilarity that can take larger steps; this will help us to formally justify the soundness of multi-step weakest preconditions.

\begin{definition}[Bisimulation with Leaps]
A \emph{bisimulation with leaps} is a relation $R \subseteq C \times C$, such that for all $c_1 \mathrel{R} c_2$, (1)~$c_1 \in F$ if and only if $c_2 \in F$, and (2)~$\delta^*(c_1, w) \mathrel{R} \delta^*(c_2, w)$ for all $w \in \bits^{\sharp(c_1, c_2)}$.
A \emph{symbolic bisimulation with leaps} is a formula $\phi$ such that $\sem{\phi}_\mathcal{L}$ is a bisimulation with leaps.
\end{definition}

Bisimulations with leaps can be more concise because they do not need to constrain configurations where both P4A are just buffering input, waiting for the next transition.

\begin{example}%
\label{ex:somewhat-fewer-pairs}
Recall the bisimulation from \cref{ex:lots-of-pairs}.
This relation contains the bisimulation with leaps $R'$, which is the smallest relation satisfying the rules
\begin{mathpar}
    \inferrule{%
        w \in \bits^{32}
    }{%
        \angl{\mathtt{q2}, s_1, w} \mathrel{R'} \angl{\mathtt{q5}, s_2, \epsilon}
    }
    \and
    \inferrule{%
        q \in \{ \acc, \rej \}
    }{%
        \angl{q, s_1, \epsilon} \mathrel{R'} \angl{q, s_2, \epsilon}
    }
\end{mathpar}
\end{example}

Bisimilarity with leaps is a sound and complete proof principle for language equivalence, which we record as follows.

\begin{restatable}{lemma}{restatebisimulationwithleapssoundcorrect}
Let $\phi$ be a formula.
The following are equivalent:
\begin{enumerate}
    \item
    There exists a symbolic bisim.\ with leaps $\psi$ s.t.\ $\phi \vDash \psi$.

    \item
    If $c_1 \sem{\phi}_\mathcal{L} c_2$, then $L(c_1) = L(c_2)$.
\end{enumerate}
\end{restatable}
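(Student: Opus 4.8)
The plan is to prove the two implications separately, handing off the easy direction to \cref{lemma:symbolic-bisimulation-sound-complete} and supplying a genuine induction for the hard one. Throughout, I would use that $\sem{\phi}_\mathcal{L} \subseteq \sem{\psi}_\mathcal{L}$ exactly when $\phi \vDash \psi$, and that $L(c_1) = L(c_2)$ unfolds to ``$\delta^*(c_1, w) \in F \iff \delta^*(c_2, w) \in F$ for all $w$.''

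For the direction $(2) \Rightarrow (1)$, I would first observe that every ordinary bisimulation is automatically a bisimulation with leaps: if $R$ satisfies $\delta(c_1, b) \mathrel{R} \delta(c_2, b)$ for all $b \in \bits$, then a routine induction on word length gives $\delta^*(c_1, w) \mathrel{R} \delta^*(c_2, w)$ for every $w$, and in particular for the $w$ of length $\sharp(c_1, c_2)$ demanded by the leap condition, while the accepting clause is literally the same. Hence every symbolic bisimulation is a symbolic bisimulation with leaps. Since condition (2) here is exactly condition (3) of \cref{lemma:symbolic-bisimulation-sound-complete}, that lemma yields a symbolic bisimulation $\psi$ with $\phi \vDash \psi$, and by the observation $\psi$ is also a symbolic bisimulation with leaps, establishing (1).

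The substantive direction is $(1) \Rightarrow (2)$. Given a symbolic bisimulation with leaps $\psi$ with $\phi \vDash \psi$, it suffices to show that any bisimulation with leaps $R$ relates only language-equivalent configurations. I would prove by strong induction on $|w|$ that for all $c_1 \mathrel{R} c_2$ we have $\delta^*(c_1, w) \in F \iff \delta^*(c_2, w) \in F$. The base case $w = \epsilon$ is clause (1) of the definition. For the step, set $k = \sharp(c_1, c_2) \geq 1$ and split on $|w| \geq k$ versus $|w| < k$. When $|w| \geq k$, I factor $w = uv$ with $|u| = k$; clause (2) gives $\delta^*(c_1, u) \mathrel{R} \delta^*(c_2, u)$, and since $|v| = |w| - k < |w|$ the induction hypothesis closes the case.

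The main obstacle is the remaining range $1 \leq |w| < k$, where no leap has completed and the induction hypothesis is unavailable. Here I would argue directly that neither $\delta^*(c_1, w)$ nor $\delta^*(c_2, w)$ is accepting, so the biconditional holds vacuously. This rests on tying the abstract leap-size formula to the concrete dynamics of $\delta$: a configuration $\angl{q, s, w'}$ lies in $F$ only when $q = \acc$ with empty buffer, and a transition (possibly reaching $\acc$) fires only after exactly $\opsize{\Sop(q)} - |w'|$ buffering steps. A short case analysis on the clauses defining $\sharp(c_1, c_2)$ then shows $k$ never exceeds the number of steps until either side first transitions: if both states lie in $Q$, fewer than $k$ steps leave both in $Q$ and hence non-accepting; if one side already lies in $\{\acc, \rej\}$, then after a single step it is stuck in $\rej$ while the $Q$-side is still buffering. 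I expect packaging this case analysis cleanly to be the fiddliest part, though conceptually it is just the statement that a leap never jumps over an accepting configuration.
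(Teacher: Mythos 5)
Your proposal is correct and follows essentially the same route as the paper's proof: the backward direction goes through \cref{lemma:symbolic-bisimulation-sound-complete} plus the observation that every (symbolic) bisimulation is a (symbolic) bisimulation with leaps, and the forward direction is the same induction on $|w|$, splitting on $|w| \geq \sharp(c_1,c_2)$ (factor the word, apply the leap clause and the induction hypothesis) versus $|w| < \sharp(c_1,c_2)$ (both configurations stay non-accepting). The only cosmetic difference is that the paper packages this induction as the construction of an explicit bisimulation $R$, namely the closure of $\sem{\psi}_\mathcal{L}$ under $\delta^*$, whereas you prove language equality directly; the content is identical.
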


\noindent
We can adapt \cref{algorithm:symbolic-refinement} to calculate the weakest symbolic bisimulation with leaps instead, if we adapt the axiomatization of the weakest precondition operator, as follows.

\begin{theorem}
\cref{algorithm:symbolic-refinement} remains correct if we change the condition on $\WP$ to require that, for all formulas $\phi$ and all $c_1, c_2 \in C$, the following equivalence holds:
\begin{mathpar}
    \forall w \in \bits^{\sharp(c_1, c_2)}.\ \delta^*(c_1, w) \sem{\phi}_\mathcal{L} \delta^*(c_2, w)  \vspace{-2mm} \\ {} \iff
    \forall \psi \in \WP(\phi).\ c_1 \sem{\psi}_\mathcal{L} c_2
\end{mathpar}
\end{theorem}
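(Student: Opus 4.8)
The plan is to observe that the correctness argument for \cref{algorithm:symbolic-refinement} given in \cref{theorem:symbolic-refinement-correct} is entirely parametric in $I$ and $\WP$: it never unfolds their definitions, relying only on the two input specifications (that $\bigwedge I$ captures equal acceptance, and that $\WP$ characterizes ``stepping into related configurations''). The present theorem changes only the second specification, replacing single-bit steps $\delta(\cdot, b)$ by leaps $\delta^*(\cdot, w)$ of length $\sharp(c_1, c_2)$. I would therefore re-run the same proof, uniformly substituting ``leap of size $\sharp$'' for ``step'' and ``symbolic bisimulation with leaps'' for ``symbolic bisimulation.'' Under this reading, the fixpoint accumulated in $\bigwedge R$ converges to the weakest symbolic bisimulation \emph{with leaps} rather than the weakest ordinary one.

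Concretely, I would re-establish the three loop invariants in their leaps form: (1)~every symbolic bisimulation with leaps $\psi$ satisfies $\psi \vDash \bigwedge R \wedge \bigwedge T$; (2)~configurations related by $\bigwedge R \wedge \bigwedge T$ are equally accepting; and (3)~configurations related by $\bigwedge R \wedge \bigwedge T$ \emph{leap} into configurations related by $\bigwedge R$. Invariant (2) is literally unchanged, since $I$ and $F$ are untouched and the acceptance clause is identical in both notions of bisimulation. For (1) and (3) I would check the two branches of the loop exactly as before: at initialization $R = \emptyset$ and $T = I$, so (1) reduces to ``a leap bisimulation relates only equally-accepting pairs,'' which is the first clause of its definition; and when a formula $\chi$ is popped and moved into $R$, preservation follows because a leap bisimulation is closed under $\sharp$-sized leaps, so $\sem{\psi}_\mathcal{L} \subseteq \sem{\chi}_\mathcal{L}$ propagates through the revised $\WP$ specification to yield $\sem{\psi}_\mathcal{L} \subseteq \sem{\psi'}_\mathcal{L}$ for each $\psi' \in \WP(\chi)$.

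Termination is inherited without change: in each iteration either the (finite-state) relation $\sem{\bigwedge R}_\mathcal{L}$ strictly shrinks or $T$ strictly shrinks, and this argument is independent of which $\WP$ is used provided each $\WP(\psi)$ is finite (guaranteed by the template-guarded instantiation). At termination $T = \emptyset$, so (2) and (3) show that $\sem{\bigwedge R}_\mathcal{L}$ is a bisimulation with leaps, while (1) shows it is the weakest one; the algorithm returns true exactly when $\phi \vDash \bigwedge R$. Finally I would invoke the soundness-and-completeness lemma for bisimulations with leaps, in place of \cref{lemma:symbolic-bisimulation-sound-complete}, to conclude that $\phi \vDash \bigwedge R$ holds if and only if every pair of $\phi$-related configurations has equal language, which is precisely the output specification of \cref{algorithm:symbolic-refinement}.

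The main obstacle is checking that the \emph{pair-dependent} leap length $\sharp(c_1, c_2)$ does not break the inductive steps: $\WP(\phi)$ returns a single set of formulas, yet it must simultaneously account for the distinct leap size of every configuration pair. The reason this goes through is that the revised $\WP$ specification is universally quantified over $c_1, c_2$ and internalizes $\sharp(c_1, c_2)$ on its left-hand side, so the biconditional holds \emph{pointwise} for each pair regardless of its leap size; consequently closure of a leap bisimulation under $\sharp$-sized leaps aligns exactly with the guarantee supplied by $\WP$. Once this alignment is in hand, the remaining details of invariant preservation mirror those in \cref{theorem:symbolic-refinement-correct} and are routine.
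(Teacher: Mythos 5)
Your proposal is correct and takes essentially the same route as the paper: the paper likewise treats the correctness proof of \cref{theorem:symbolic-refinement-correct} as parametric in the specification of $\WP$, re-reads the loop invariants with leaps of size $\sharp(c_1,c_2)$ in place of single-bit steps so that $\bigwedge R$ converges to the weakest symbolic bisimulation \emph{with leaps}, and then closes the argument by invoking the soundness/completeness lemma for bisimulations with leaps (proved in the appendix) in place of \cref{lemma:symbolic-bisimulation-sound-complete}. Your closing observation---that the revised $\WP$ specification is quantified pointwise over configuration pairs and internalizes $\sharp(c_1,c_2)$, so the pair-dependent leap length cannot break invariant preservation---is exactly the right justification for why the substitution is sound.
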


We can adapt the existing definition of $\WP$ to conform to this specification: simply repeat $\WP^<$ and $\WP^>$ as many times as is indicated by the source templates $t_1'$ and $t_2'$.

\subsection{Combining Optimizations}

The optimizations discussed are largely orthogonal.
However, their combination naturally gives rise to a third optimization, where $\reach_\phi$ is computed using leaps as well.
This results in an algorithm that computes a symbolic bisimulation with leaps that does not constrain intermediate (buffering) configurations.
We refer to the Coq development for full details.

\begin{table}[t]
  \centering\small
  \vspace{17pt}
  \caption{Concepts from earlier in this paper and their realizations in the
  implementation.}%
  \label{table:impl}
  \begin{tabular}{@{}lll@{}}
    Paper name & Coq name & Implemented as \\
    \midrule
    $\Saut$ (\Cref{fig:internal-syntax}) &
    \texttt{Syntax.t} &
    Dependent record \\
    $\Sexp$ (\Cref{fig:internal-syntax}),
    \typexpbare{} &
    \texttt{expr} &
    Type-indexed ind. \\
    $\WP$ &
    \texttt{wp} &
    Gallina function \\
    $\bigwedge R \vDash \psi$ &
    \texttt{interp\_entailment} &
    Gallina function \\
    $\phi \vDash \bigwedge R$ &
    \texttt{interp\_entailment'} &
    Gallina function \\
    Bisimilarity &
    \texttt{bisimilar} &
    Coinductive relation \\
    \Cref{algorithm:symbolic-refinement} &
    \texttt{pre\_bisimulation} &
    Inductive relation \\
    {\normalfont\textbf{if}}
    $\bigwedge R \vDash \psi$ \dots
    &
    \texttt{decide\_entailment} &
    \LTac
  \end{tabular}
\end{table}

\section{Implementation}\label{sec:impl}
We implement \SYSTEM in Coq~\cite{bertot2013interactive,coq-web} using the Equations plugin~\cite{sozeau-2010,sozeau-mangin-2019}.
See \cref{table:impl} for a summary of how concepts from the formal development
in \Cref{sec:automata,sec:equiv,sec:optimizing} map to Coq notions.
Although an implementation in a different language might be more
efficient, our use of Coq produces rich semantic automata definitions
and reusable proofs of equivalence.
These artifacts are defined in Coq's expressive higher-order logic,
so they can be reused and composed with other mechanized logics
hosted in Coq like the Mathematical Components
library~\cite{mahboubi2017mathematical} or verification tools like the Verified Software
Toolchain~\cite{appel2011verified}.

\subsection{Automated Proof Search}
The most direct way to implement algorithms in Coq is by writing them
as functions in Gallina, Coq's functional programming language, but
unfortunately Gallina does not have I/O. As a consequence a Gallina
implementation of \cref{algorithm:symbolic-refinement} would have to
include a hand-written decision procedure for entailments $\bigwedge R
\vDash \psi$.
We instead realize \cref{algorithm:symbolic-refinement} in Coq as an
inductive relation (\cref{fig:pre-bisim-coq}), so we can rely on
external SMT solvers to handle entailments. 
This has the added benefit of sidestepping Coq's termination checker.\footnote{
  In particular, our pen-and-paper termination proof of \cref{algorithm:symbolic-refinement} does not directly translate to Coq's guarded primitive recursion~\cite{gimenez1994codifying}. 
}
The algorithm is run by performing proof search within the inductive relation,
and each step of the search proceeds by checking an entailment in the high-level automata logic.
While the logic of entailments is close to SMT's theory of bitvectors,
it also has richer terms that need to be desugared
(for example a finite-map encoding of the program store, constraints on the input packet length, constraints on the automata states, etc.).

\subsection{Reduction to SMT}
To reach a low-level logic amenable to off-the-shelf solvers,
we simplify formulas before checking them, through
a chain of verified simplifications and translations (\cref{fig:overview-architecture}).

This compilation turns formulas from the high-level logic \textsf{ConfRel} into
low-level first-order formulas over bitvectors, \textsf{FOL(BV)}. 
In order, the implementation performs
\begin{inparaenum}[(1)]
  \item algebraic simplifications,
  \item template filtering,
  \item \textsf{FOL} compilation, and
  \item store elimination.
\end{inparaenum}
We now elaborate on each step.

\begin{figure}
\begin{lf-grey}[language=Coq, basicstyle=\footnotesize]
Inductive pre_bisimulation
  : conf_rel -> list conf_rel -> list conf_rel -> Prop :=
| Skip: forall phi R psi T,
    pre_bisimulation phi R T ->
    interp_entailment R psi ->
    pre_bisimulation phi R (psi :: T)
| Extend: forall phi R psi T,
    pre_bisimulation (psi :: R) (T ++ wp psi) ->
    ~ interp_entailment R psi ->
    pre_bisimulation R (psi :: T)
| Done: forall phi R,
    interp_entailment' phi R ->
    pre_bisimulation phi R [].
\end{lf-grey}
  \caption{\Cref{algorithm:symbolic-refinement} as an inductive relation in Coq.}%
  \label{fig:pre-bisim-coq}
\end{figure}

\begin{figure}
\begin{lf-grey}[language=Coq, basicstyle=\footnotesize]
Lemma small_filter_equiv:
  lang_equiv_state
    (P4A.interp IncrementalBits.aut)
    (P4A.interp BigBits.aut)
    IncrementalBits.Start
    BigBits.Parse.
Proof.
  solve_lang_equiv_state_axiom
    IncrementalBits.state_eqdec
    BigBits.state_eqdec
    false.
Time Qed.
\end{lf-grey}
  \caption{A Coq proof that the states \texttt{Start} and \texttt{Parse} of two
  automata named \texttt{IncrementalBits} and \texttt{BigBits} accept the same
  language (\texttt{lang\_equiv\_state}). The tactic
  \texttt{solve\_lang\_equiv\_state\_axiom} takes decision procedures for
  equality on the state sets of each automaton and a flag controlling a tactic optimization for large problems (here \texttt{false}).}%
  \label{fig:example-proof-coq}
\end{figure}

First, we use smart constructors to apply local algebraic simplifications.
Each application of the weakest precondition operator increases the size of a formula,
so these simplifications help prevent the formulas from growing too quickly.

Second, we perform template filtering to discard unused premises from
entailments.
Entailments have the form \[\bigwedge \phi_i \implies \psi_i \vDash \phi \implies
\psi,\] where $\phi$ and all $\phi_i$ are templates.
We discard any conjunct with $\phi_i \neq \phi$ and emit a simplified
entailment $\phi \vDash \bigwedge \psi_i \implies \psi$.
This puts our goal in the logic \textsf{ConfRelSimp}.

Third, we embed \textsf{ConfRelSimp} into the more general \textsf{FOL(Conf)} 
syntax, removing references to states.
This fragment is the first-order theory of bitvectors and finite maps.

Finally, the store elimination pass fits formulas into the theory of
bitvectors \textsf{FOL(BV)}, by turning finite maps into first-order variables. 
This is necessary because some SMT solvers we targeted do not support the
theory of finite maps.

\subsection{Querying Solvers}
The final \textsf{FOL(BV)} formula is serialized to SMT-LIB by a custom Coq 
plugin and passed to an off-the-shelf SMT solver that can be selected using a custom vernacular command.
Currently, we support Z3~\cite{de2008z3}, CVC4~\cite{barrett-etal-2011}, and Boolector~\cite{niemetz-etal-2014}.

Before implementing our own plugin, we tried existing SMT integrations
for Coq, including CoqHammer~\cite{czajka-kaliszyk-2018} and
SMTCoq~\cite{armand-etal-2011}.
Neither solved our problem:
CoqHammer scaled poorly due to its flexible SMT encoding and proof
search procedure,
while SMTCoq performed better but lacked support for quantifiers.
Note however that, in contrast with our plugin, both of these tools perform \emph{proof reconstruction} to
produce a Coq proof term from solver output.
Consequently, our proof search must \emph{trust} the output of the SMT solver and our plugin.
A straightforward technique for doing this is to directly \texttt{admit} the low-level goals once the plugin has given the thumbs up.
This is rather error-prone because it means \texttt{admit} is used within automation,
and moreover,
it forces the final proof to be \texttt{Admitted} by the Coq kernel.
An alternative is to use a pair of \emph{axioms} for positive and negative validity of formulas in the low-level logic
and use the output of the SMT solver to conditionally apply the axioms.
While this approach is less performant because the Coq kernel checks the resulting term,
it allows for closed proof terms and avoids accidental misuse of \texttt{admit} in automation.

\subsection{Soundness and Trusted Computing Base}
The most important metatheoretic goal is to ensure that our algorithm produces
a certificate of equivalence only when the input parsers are indeed equivalent.
Towards this goal, our certificate-producing equivalence checker has a compact
TCB, with soundness relying on the Coq definitions of automata and automata
equivalence, the correctness of the SMT solver, the faithfulness of the
pretty-printing plugin, and the soundness of the Coq typechecker extended with
Streicher's axiom K~\cite{streicher-1993}.
The SMT solver and plugin (and the corresponding use of \texttt{admit}/axioms) are used
only in the proof search algorithm and could be removed from the TCB by implementing proof
reconstruction.

Our Coq development proves the soundness theorems stated in the paper,
but omits completeness and termination arguments.
Our proof search is really a semi-decision procedure: either the tactic
finds a proof and produces a Coq proof term, or it does not find a
proof and no certificate is produced.
Trustworthy certificates, our main metatheoretic goal, only require
a mechanization of soundness.
In fact, the only termination or completeness bug we encountered arose
from incorrectly interpreting failed SMT queries as UNSAT, which was a
bug in the plugin and not in the algorithm itself.

\begin{figure}
  \centering
  \includegraphics[clip=true,trim=510pt 136pt 660pt 96pt,width=0.8\columnwidth,page=6]{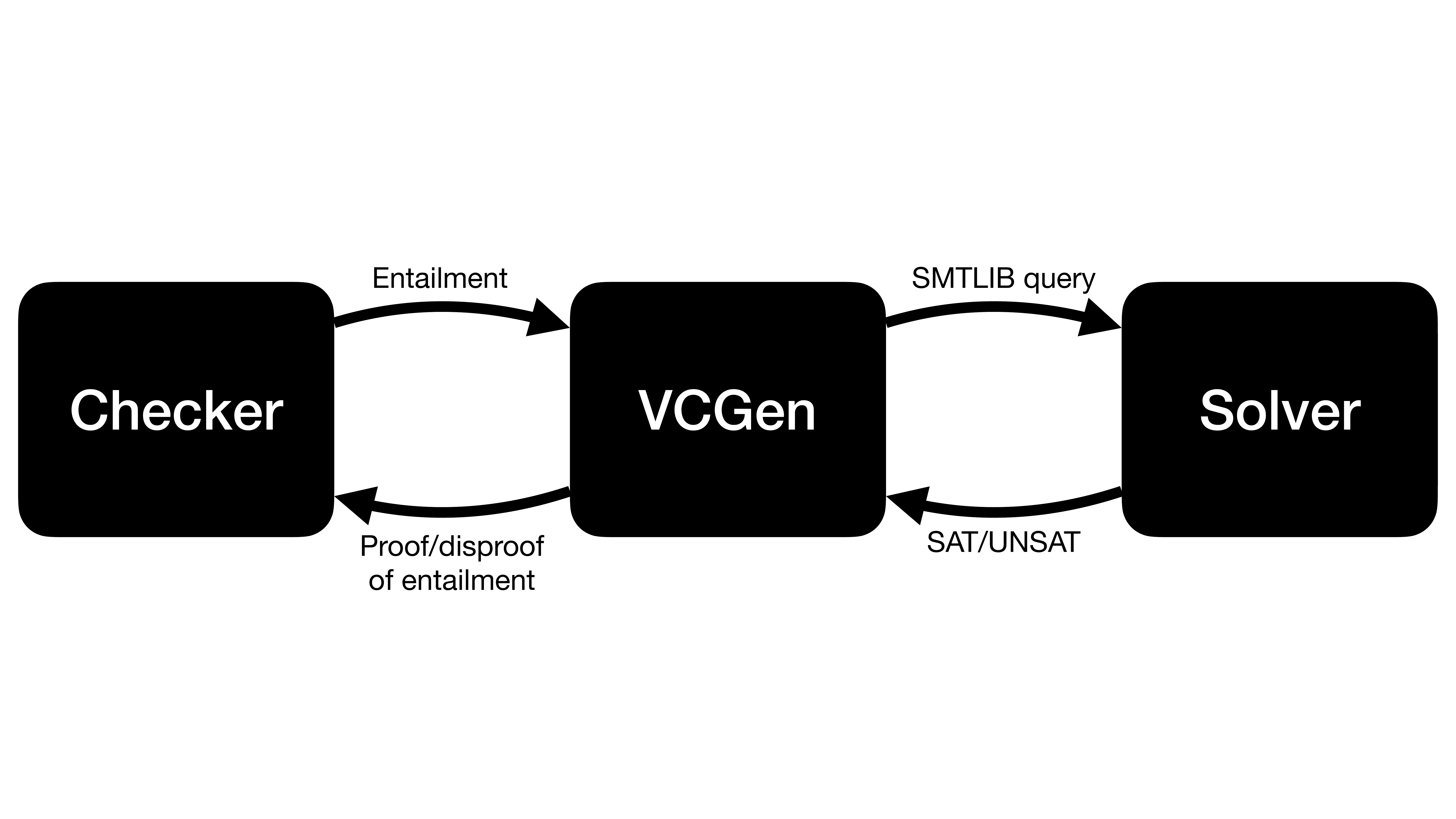}
  \caption{The \SYSTEM implementation architecture.
  In each iteration, a \textsf{ConfRel} formula is checked by
  reduction to SMT via a chain of intermediate logics (at left).
  A Coq plugin pretty-prints \textsf{FOL(BV)} syntax to SMT-LIB syntax 
  and invokes the SMT solver.
  The asterisk (*) on \textsf{Proof} and \textsf{Disproof} (at right) indicates
  that the plugin does not produce proofs.
  When the procedure halts, either the Coq goal is provable
  (\textsf{QED}), or the goal is stuck and no certificate is
  produced.}%
  \label{fig:overview-architecture}
\end{figure}

\begin{table*}
  \vspace{17pt}
  \centering
  \caption{Parsers in our evaluation: \textbf{States} gives the total number of states in both parsers,
  \textbf{B}ranched gives the number of bits in automata transition select statements,
  \textbf{T}otal gives the number of bits across all variables, and \textbf{Runtime} and \textbf{Memory} give the aggregate runtime and maximum resident size. An optimal
  verification algorithm would need to represent $2^\textbf{B}$ states, while an
  explicit state space would contain $2^\textbf{T}$ states. An asterisk on the memory use indicates an out-of-memory exception.
  }\label{tab:benchmarks}
  \begin{small}
  \begin{tabular}{llrrrrr}\toprule
    & \textbf{Name} & \textbf{States} & \textbf{B}ranched (bits) & \textbf{T}otal (bits) & \textbf{Runtime} (minutes) & \textbf{Memory} (GB) \\\midrule
    \multirow{5}{*}{\STAB{\rotatebox[origin=c]{90}{\textbf{Utility}}}} &  State Rearrangement & 5 & 8 & 136 & 0.12 & 0.66  \\
    & Variable-length parsing & 30 & 64 & 632 & 953.42 & 405.64 \\
    & Header initialization & 10 & 10 & 320 & 15.95 & 13.71 \\
    & Speculative loop & 5 & 2 & 160 & 4.12 & 3.16  \\
    & Relational verification & 6 & 64 & 1056  & 1.68 & 2.07 \\
    & External filtering & 6 & 64 & 1056 & 1.18 & 1.71 \\\midrule
    \multirow{5}{*}{\STAB{\rotatebox[origin=c]{90}{\textbf{Applicability}}}} & Edge & 28 & 52 & 3184 & 528.38 & 251.26 \\
    & Service Provider & 22 & 50 & 2536 & 1244.5 & $499.80^*$ \\
    & Datacenter & 30 & 242 & 2944 & 1387.95 & 404.50 \\
    & Enterprise & 22 & 176 & 2144 & 217.93 & 66.13 \\
    & Translation Validation & 30 & 56 & 3148 & 746.2 & 350.48 \\
  \bottomrule
  \end{tabular}
  \end{small}
\end{table*}

\section{Evaluation}\label{sec:case-studies}

We evaluate \SYSTEM through case studies (listed in
\autoref{tab:benchmarks}) along two dimensions:
\begin{inparaenum}[(1)]
  \item the \emph{utility} of bisimulations for solving problems of interest in networking (and, by extension, the \emph{expressiveness} of \SYSTEM), and
  \item the \emph{applicability} of \SYSTEM to real-world parsers (and, by extension, its \emph{scalability} to non-trivial inputs).
\end{inparaenum}

\subsection{Utility}\label{subsec:utility-studies}
To evaluate whether equivalence checks are useful in the networking
domain, we identified \emph{six} distinct verification tasks, and
showed how they can be solved with \SYSTEM. 

\paragraph{State Rearrangement.}
Because parser states translate to hardware resources,
it is common for compilers to merge and split parser states,
to optimize the write and branch behavior for the particular hardware.
We implemented a reference parser for a stylized IP and UDP/TCP protocol
in which the prefix is 64 bits of IP
and the suffix is either 32 bits of UDP or 64 bits of TCP\@ (\autoref{fig:combined-header}).
Note that the TCP and UDP headers share a common prefix of 32 bits. We
then implemented an optimized parser that extracts the IP and common
prefix, and then branches to determine how to parse the remaining
bits. We used \SYSTEM to show that the parsers accept the same
packets, even though they do so in different ways.

\paragraph{Variable-Length Formats.}
Handling formats with variable lengths, such as type-length-value (TLV)
encodings, is a common challenge in protocol parsing, because the
amount of data parsed in each state depends on a previously-parsed
values.
We implemented a parser for Internet Protocol options~\cite{ip-params},
a common variable-length networking format. Our parser handles up to
two generic options, with data-dependent lengths that range from 0
bytes to 6 bytes.
We also implemented a custom parser for the Timestamp option, in which
a specialized parser extracts the fields specific to its format. Again,
we used \SYSTEM to show that the parsers accept the same packets, even
though the header formats are variable and they do so in different
ways.

\paragraph{Header Initialization.}
A common error in P4 programs is reading from uninitialized headers.
In parsers, this can happen when several paths converge on a common
state, and the programmer has forgotten to write to a given header on
one or more of the paths.
For example, VLAN tags~\cite{vlanieee} are an optional 4-byte format
that can appear at the end of an Ethernet frame.
If the VLAN tag is present, its value can be used to influence routing
behavior. However, a common bug is to accidentally branch on an
uninitialized VLAN tag when it was not present in the packet.
To fix this bug, one can assign a default value to missing VLAN tags.
We implemented a parser for Ethernet, optional VLAN, IP, and UDP, that
either parses a VLAN tag or fills it with a default value if it is
missing (\autoref{fig:self-comp} of the appendix).
We used \SYSTEM to check that the set of accepted packets is
independent of the initial store. This check succeeds, so we conclude that the
parser \emph{not} depend on uninitialized headers.

\paragraph{Speculative Extraction.}
Many high-performance protocol parsers \emph{speculatively extract}
packet data and then make control-flow decisions based off the
contents of that data.
We implemented the example from \cref{fig:example-parser-syntax} with MPLS followed
by UDP, in which the body of the optimized MPLS loop speculatively
extracts two MPLS headers. If the first of these indicates the end of
the header, then the parser has overshot the MPLS header, and the
remaining data must be reinterpreted as a UDP packet. We used \SYSTEM
to verify that these parsers accept the same packets.

\begin{figure}
  \centering
  \def\arraystretch{0}
  \begin{tabular}{p{0.225\textwidth}:p{0.22\textwidth}}
    \begin{lf-grey}[basicstyle=\footnotesize\ttfamily]
parse_ip {
  extract(ip, 64);
  select(ip[40:43]) {
    (0001) => 
      parse_udp
    (0000) => 
      parse_tcp
  }
}
parse_udp {
  extract(udp, 32);
  goto accept
}
parse_tcp {
  extract(tcp, 64);
  goto accept
}
\end{lf-grey}
  &
    \begin{lf-grey}[basicstyle=\footnotesize\ttfamily]
parse_combined {
  extract(ip, 64);
  extract(pref, 32)
  select(ip[40:43]) {
    (0001) => 
      accept
    (0000) => 
      parse_suff
  }
}
parse_suff {
  extract(suff, 32);
  goto accept
}

\end{lf-grey}
  \end{tabular}
  \caption{Reference and combined parsers for a stylized IP and TCP/UDP protocol.}%
  \label{fig:combined-header}
\end{figure}

\begin{figure*}[ht]
  \centering
  \includegraphics[clip=true,trim=0pt 270pt 0 0,width=0.85\textwidth]{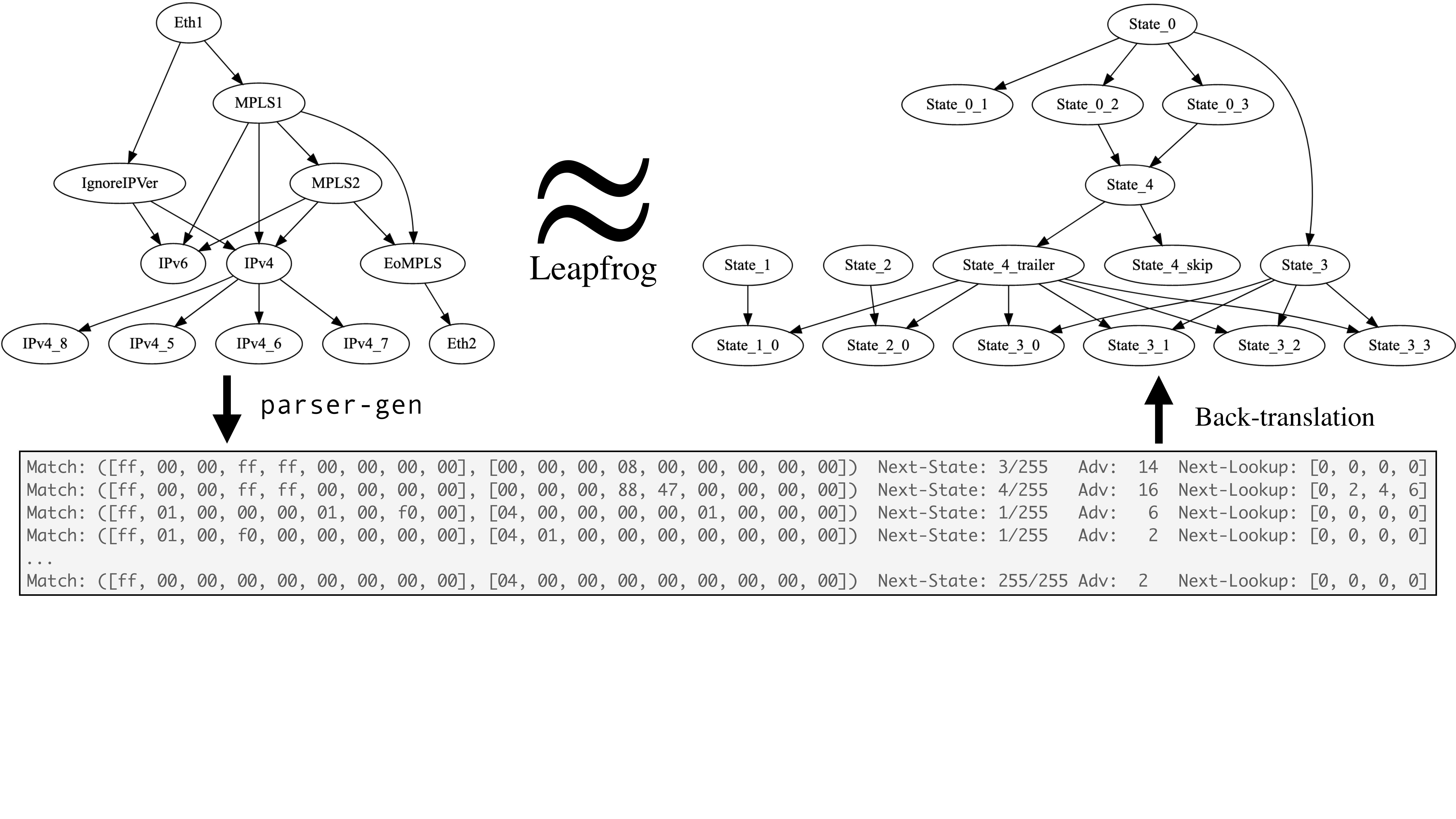}
  \caption{The Edge stack case study. The original parser (left) is compiled to a table (below,
  most entries elided), which we translate back into a parser (right) and
  prove equivalent to the original.}%
  \label{fig:trans-valid}
  \vspace{0.2em}
\end{figure*}

\paragraph{External Filtering.}

Another common idiom is to implement a lenient parser that accepts
well-formed and malformed packets, and then compensate with an
external filter (e.g., by dropping packets later).
Recall that the last two bytes of an Ethernet header are a type field that
determines the header that follows---e.g., IPv4, IPv6, or something
else.
We implemented two parsers: a lenient parser that assumes the input
packet is IPv6 if it is not IPv4, and a strict parser that explicitly
checks the Ethernet type field and rejects other types of packets.
We modeled an external filter for the lenient parser by picking an
initial relation that not only requires the states to be equally
accepting, but to also terminate with a store where the Ethernet type
is IPv4 or IPv6.
We then computed a bisimulation modulo this initial relation and prove
that the two parsers are equivalent.
This case study shows that \SYSTEM can do more than just relate the
sets of accepted packets: it can also relate the values in the stores.

\paragraph{Relational Verification.}
\SYSTEM can also verify other useful relational properties of parsers.
For instance, consider two parsers that extract data into differently named
headers (e.g., the example from \Cref{sec:overview}), or even with
different fragments of the input packet scattered across the store.
\SYSTEM can be used to phrase and verify relations between
parser stores. To demonstrate this, we verified that when the lenient
and strict parsers from the previous case both accept some packet,
there is a correspondence between the values in their stores.
We picked an initial relation that requires the values for headers on
the left to correspond to the values for headers on the right,
provided both configurations are accepting and used \SYSTEM to establish
a pre-bisimulation.
Compared to language equivalence, these applications do not have as much
metatheory developed in Coq, where there is a lemma connecting an appropriately
configured pre-bisimulation to language equivalence.
However, we believe our technique is sound and could be justified in Coq.

\medskip
Separately from these six tasks, we used \SYSTEM to compare parsers that did \emph{not} accept the same packets, such as the two parsers of the \emph{external filtering} task.
This was done as a sanity check to see if (1)~the proof script still terminated and (2)~it did not erroneously claim to prove equivalence.
A failure would indicate a bug in our pen-and-paper analysis of the algorithm, or our trusted codebase.
Fortunately, \SYSTEM acts as expected, by reaching the end of the main loop, then failing when trying to apply the \texttt{Close} step.

\subsection{Applicability}

To evaluate \SYSTEM's applicability to real-world parsers, we encoded
the benchmarks used by the developers of the \texttt{parser-gen}
tool~\cite{gibb2013design}. It provides parsers for \emph{four}
different scenarios:
\begin{inparaenum}[(1)]
  \item Edge, for a gateway router,
  \item Service Provider, for a core router,
  \item Datacenter, for a top-of-rack switch in a cloud,
  \item and Enterprise, for a router in a campus or company network.
\end{inparaenum}
Each of these parsers supports a different set of protocols depending
on its intended use.%
\footnote{We did not consider one of the parsers discussed in the
  \texttt{parser-gen} paper, Big-Union, which models the combined
  features from all four scenarios. Unlike the others, Big-Union does
  not model a typical scenario but is primarily intended for bounding
  hardware requirements.}
We translated each of these parsers into a corresponding P4A parser
and used \SYSTEM to perform a \emph{self-comparison} check---i.e., we
verified that each parser is equivalent to itself.

Next, we used \SYSTEM to perform translation validation. The
\texttt{parser-gen} framework also comes with a compiler that takes a
parse graph (analogous to a P4A) and compiles it to an
efficient hardware representation.
The compiler models constraints at the hardware level (e.g.,
limiting the number of bits that can be extracted or branched on in
each state) and incorporates sophisticated optimizations to make the
best use of limited resources (e.g., splitting and merging states).

We ran the \texttt{parser-gen} compiler on the parser for the Edge
router, which generated a hardware-level representation with states,
instructions, and transitions encoded in a
table---see~\Cref{fig:trans-valid}.
We then wrote a script to translate the table representation back into a P4 automaton.%
\footnote{
  While the two languages are similar,
  the parser-gen hardware representation is different enough from P4A (mainly due to unproductive states and speculative lookahead transitions) to make the reverse translation fuzzy.
  Of all of the parser-gen benchmarks, we found that Edge's hardware table was the closest to P4A and required the least amount of manual repair.
  This technique could in principle be adapted to other parser-gen benchmarks;
  while they are a bit larger and could stress \SYSTEM's scaling,
  the main challenge is a robust translation from hardware tables to P4A.
  }
Finally, we used \SYSTEM to check the equivalence of the
two parsers.

We were able to prove that the \texttt{parser-gen} compiler preserves
the semantics of the original Edge P4A automata. Hence, \SYSTEM was able to validate a
third-party compiler's output on its own benchmark program. Note that we 
designed \SYSTEM before we had experience using \texttt{parser-gen}.

\subsection{Discussion}

Overall we find that \SYSTEM can be applied to a diverse set of
practical scenarios. In the rest of this section, we discuss some of
our experiences using the tool, including its limitations and
directions for future work.

\paragraph{Automation}
In the early stages of this work, we derived and validated the relevant
bisimulations without automated tactics.
This turned out to be a significant proof burden---e.g., our manual
equivalence proof for the State Rearrangement case study took two
weeks of work.
In contrast, the push-button \SYSTEM proof takes only six seconds on a laptop.
Although \SYSTEM could be adapted to be more interactive,
letting the user apply \texttt{Skip} or \texttt{Extend} and prove the required entailment,
we believe that its power lies in the convenience offered by delegating goals to an SMT solver.

\SYSTEM is particularly useful in situations where it is difficult to see
whether two parsers are equivalent, such as in the translation validation
experiment.
While we spent a few days trying to prove the translation validation parsers equivalent on pen-and-paper, 
we were unsure that they were actually equivalent until the \SYSTEM proof succeeded.

\paragraph{SMT Solver Performance}
SMT solvers have unpredictable performance.
We used Z3 for the queries in most of our benchmarks,
but sometimes needed to switch to CVC4.
Overall we found that all of the queries were solved in at most 10 seconds,
with~99\% taking at most 5 seconds.
It was easy to switch between SMT solvers because we targeted a well-supported
subset of the SMT-LIB query format (namely the theory of bitvectors).

\paragraph{Overall Performance}
Like any verification tool, \SYSTEM has limitations.
Scaling to large parsers is challenging due to the combinatorial
explosion of configurations.
All of the smaller experiments (up to around 10 states) were interactive on stock hardware,
finishing in $\leq 5$ minutes and $\leq 16$~GB of memory.
Most took several minutes.
For larger experiments, the larger state space lead to significantly
higher memory demands.
Coq needed 400 GB of RAM to verify the largest Applicability study
(Datacenter) and ran out of memory on the Service Provider study.
Although this is a lot, it's unsurprising
because the concrete state space for the Applicability study would have around $2^{242}$ elements.

The optimizations discussed in \Cref{sec:optimizing} had a significant impact.
Specifically, our smallest State Rearrangement benchmark went from 30 seconds and 1.7~GB of memory to
42 minutes and 36 GB of memory when leaps were disabled;
it did not finish without reachable state pruning.

\paragraph{Future Work}
One way to improve the scalability of \SYSTEM in the future could be
to investigate compositional reasoning techniques. Such techniques
could facilitate divide-and-conquer strategies, allowing \SYSTEM to be applied
to larger parsers than our current implementation supports.

Another possibility is to vary the underlying algorithm.
One could imagine a symbolic treatment of Hopcroft and Karp's algorithm~\cite{hopcroft-karp-1971}, which approximates a suitable bisimulation from below, or Paige and Tarjan's \emph{partition refinement} algorithm~\cite{paige-tarjan-1987}, which represents the current approximation of the largest bisimulation in terms of its equivalence classes.
For the latter, one would need to estimate the number of configurations in a symbolically represented equivalence class to choose the next block to split.

The bulk of \SYSTEM's memory usage is occupied by the proof object generated by our \LTac{} script.
Alternatively, one could implement the same algorithm in Gallina, axiomatizing the decision procedure, and extract it to OCaml.
While such an approach is likely to be more efficient, it would also undermine our goal of producing a proof object that is reusable in a larger verification effort.

P4As are an abstraction of P4 parsers.
For one thing, they do not incorporate externs, which are architecture-specific extensions that support, for instance, checksum algorithms or persistent state.
In addition, P4 parsers support arrays (in the form of header stacks), subparser calls, and parser lookahead, all of which are not part of our definition of P4 automata.
More work is necessary to see whether P4As can be extended to support or simulate these features.

In the future, we would like to use \SYSTEM's equivalence
checks to systematically perform translation validation on other
networking stacks. For example, one could imagine writing a library of
reference implementations for protocols defined in RFCs, and checking
that real-world implementations conform to those standards.

\section{Related work}\label{sec:related}

\paragraph{Automata Equivalence Checking.}

Our algorithm is a variation on Moore's classical algorithm to decide all-pairs language equivalence in a DFA~\cite{moore-1956}.
Moore's approach was later improved upon by \emph{partition refinement}~\cite{hopcroft-1971,kanellakis-smolka-1983,paige-tarjan-1987}.
We deviate from these classical procedures in two key aspects.

First, instead of using concrete data structures we use symbolic ones.
This idea goes back to Coudert et al.~\cite{coudert-etal-1989}, and has since been widely applied~\cite{bouajjani-etal-1990,bouajjani-etal-1992,burch-etal-1992,bouali-desimone-1992,derisavi-2007}.
These algorithms use Binary Decision Diagrams (BDDs) as their symbolic representation.
Other authors favored a logical representation, combined with decision procedures for the logic~\cite{hennessy-lin-1995,feng-etal-2013}.
Dehnert et al.~\cite{dehnert-etal-2013} makes use of an SMT solver to decide questions about the logical representations.

Second, instead of maintaining a list of equivalence classes, we
maintain a representation of an equivalence relation. The earliest
instance of this we have been able to track down is due to Bouali and
De Simone~\cite{bouali-desimone-1992}. Mumme and Ciardo observed that
such an approach is particularly beneficial when there tend to be a
large number of equivalence classes~\cite{mumme-ciardo-2011}.

Algorithms based on \emph{bisimulation up to
  congruence}~\cite{bonchi-pous-2013,dantoni-etal-2018} are similar in
the sense that they mitigate state space explosion---in this case, as
a result of determinization. They exploit the internal
structure of the expanded state space to terminate early, something
that inspired us propose the notion of a bisimulation with leaps.

\paragraph{Network Verification.}
The \texttt{p4v} verifier~\cite{p4v}, the verifier of Neves et
al.~\cite{neves-p4}, and Aquila~\cite{aquila} are push-button
verifiers for functional properties of P4 programs, including P4
parsers.
They work by translating to a verification IR (either guarded command
language~\cite{dijkstra-1975} or simple C) and then analyzing the IR\@.
None of these tools produce proofs, and their translations are not
proved sound with respect to a reusable semantics of P4.
Moreover, these tools verify \emph{functional} specifications about a single P4 program.
Our work is complementary because by contrast, our tool produces \emph{relational} proofs grounded in a
reusable Coq semantics for two P4 automata. 
Aquila includes a self-validation system for finding semantic bugs in
the verifier.
Defining our semantics in Coq allowed us to foundationally
\emph{prove} the absence of semantic bugs, so while \SYSTEM does not
need self-validation, it could be an oracle for validating other tools.

The Gauntlet translation validator checks program equivalence for P4 programs
without parsers or externs.
We see this work as complementary to Leapfrog, which focuses on parser
equivalence.
Outside the parser, P4 programs have loop-free control flow, complex data
structures, and rich semantic actions.
Inside the parser, P4 programs have loops, simpler data structures, and simpler
semantic actions.
Consequently, parser verification is concerned with control flow more than
anything else, making it a different kind of verification problem than
verification for the rest of a P4 program.

\paragraph{Automatic Foundational Verification}
SpaceSearch~\cite{spacesearch} exposes a high-level solver interface to search
large state spaces. In contrast, our solver interface is lower level, and our
tool avoids extraction to produce a Coq certificate.

CreLLVM~\cite{crellvm} instruments LLVM to produce translation
validation proofs in a relational Hoare logic, resulting in a compact
TCB and reusable Coq proof certificate. \SYSTEM has a similar TCB, but
completeness (\cref{theorem:symbolic-refinement-correct}) means it
does not require proof hints.

The Narcissus~\cite{narcissus} and EverParse~\cite{everparse} tools
synthesize correct parsers and serializers from high level
descriptions of packet formats using verified parser combinator
libraries.
Synthesis and equivalence are related but distinct problems, and our
tool is complementary to synthesis tools.
For instance, a P4 parser generated by a parser synthesizer like
EverParse might be further optimized by a P4 compiler to run on
hardware.
\SYSTEM could validate the results of compilation, preserving the
guarantee offered by the synthesizer.

GPaco~\cite{hur2013power, hur2020equational} is a framework for
modular coinductive reasoning in Coq, which supports ``up-to''
bisimilarity techniques.
It is designed for interactive use and focuses on automating low-level
proof steps.
GPaco may be useful for generalizing our mechanized metatheory for
leaps.

\begin{acks}                            
  We thank Glen Gibb for help understanding and using his \texttt{parser-gen} framework.
  We received helpful feedback on the writing from
  Glen Gibb,
  James R. Wilcox,
  Bill Harris,
  and members of the Cornell programming languages group;
  we thank them for their feedback.

  R.~Doenges and N.~Foster were supported in part by the National Science
  Foundation under grant FMiTF-1918396, DARPA under contract HR0011-20-C-0107,
  and gifts from Fujitsu, Google, InfoSys, and Keysight.

T.~Kapp\'{e} was partially supported by the European Union’s Horizon 2020 research and innovation programme under the Marie Sk\l{}odowska-Curie grant agreement No. 101027412 (VERLAN).
J.~Sarracino and G.~Morrisett were supported by DARPA contract HR0011-19-C-0073.
\end{acks}

\balance
\bibliography{bibliography}


\clearpage
\appendix
\newpage

\section{Omitted case study figures}

See \Cref{fig:self-comp}, \Cref{fig:self-comp2},
\Cref{fig:self-comp3-1}, and \Cref{fig:self-comp3-2} for the definitions of 
additional P4 automata used in the case studies.

\begin{figure}[h]
  \centering
  \small
  \begin{tabular}{c:c}
  \begin{subfigure}{0.21\textwidth}
  \begin{lf-grey}
 parse_eth {
   extract(ether, 112);
   select(ether[0:0]) {
     0 => default_vlan
     1 => parse_vlan
   }
 }
 default_vlan {
   vlan := 0x0000;
   extract(ip, 160)
   goto parse_udp
 }
 parse_vlan {
   extract(vlan, 32);
   goto parse_ip
 }
\end{lf-grey}
  \end{subfigure}
  &
  \begin{subfigure}{0.22\textwidth}
  \begin{lf-grey}
 parse_ip {
   extract(ip, 160);
   goto parse_udp
 }
 parse_udp {
   extract(udp, 64);
   select(vlan[0:3]) {
     1111 => reject
     _    => accept
   }
 }
\end{lf-grey}
\vspace*{8.5em}\null
\end{subfigure}
  \end{tabular}
  \caption{Ethernet stack parser with an optional VLAN tag.}%
  \label{fig:self-comp}
\end{figure}

\begin{figure}[ht]
\vspace{1em}
\centering
\small
\begin{tabular}{c:c}
\begin{subfigure}{0.457\columnwidth}
\begin{lf-grey}
parse_eth {
  extract(ether, 112);
  select(ether[96:111]) {
    0x86dd => parse_ipv6
    0x8600 => parse_ipv4
  }
}
parse_ipv6 {
  extract(ipv4, 288)
  goto accept
}
parse_ipv4 {
  extract(ipv6, 128);
  goto accept
}
\end{lf-grey}
\vspace*{.5em}\null
\end{subfigure}
&
\begin{subfigure}{0.462\columnwidth}
\begin{lf-grey}
parse_eth {
  extract(ether, 112);
  select(ether[96:111]) {
    0x86dd => parse_ipv6
    0x8600 => parse_ipv4
    _   => reject
  }
}
parse_ipv6 {
  extract(ipv4, 288)
  goto accept
}
parse_ipv4 {
  extract(ipv6, 128);
  goto accept
}
\end{lf-grey}
\end{subfigure}
\end{tabular}
\caption{Sloppy and strict parsers for Ethernet and IP\@.}%
\label{fig:self-comp2}
\end{figure}


\section{Proofs omitted from Section~\ref{sec:equiv}}

\restatesymbolicbisimulationsoundcomplete*
\begin{proof}
(1) implies (2) by definition of symbolic bisimulation.
The proof that (2) implies (3) is a standard argument from automata theory, showing that for all $c_1 \mathrel{R} c_2$ and $w \in \bits^*$, it holds that $w \in L(c_1)$ if and only if $w \in L(c_2)$, by induction on $w$.
Finally, to see that (3) implies (1), we construct $\psi$ to relate all configurations with equivalent languages, i.e.,
\begin{mathpar}
    \psi := \bigvee_{L(c_1) = L(c_2)} \phi^<(c_1) \wedge \phi^>(c_2)
\quad \text{where for $\lessgtr \in \{ <, > \}$, } \\
\phi^\lessgtr(q, s, w) := \Sstatepred{q}{\lessgtr} \wedge \bigwedge_{h \in H} \Sbhdr{\lessgtr} = s(h) \wedge \Sbbuf{\lessgtr} = w
\end{mathpar}
Clearly, ${\sem{\phi^<(c_1) \wedge \phi^>(c_2)}_\mathcal{L}} = \{ \angl{c_1, c_2} \}$ for all $c_1, c_2 \in C$, and hence $c_1 \sem{\psi}_\mathcal{L} c_2$ if and only if $L(c_1) = L(c_2)$.
This makes $\sem{\psi}_\mathcal{L}$ a bisimulation: configurations with the same
  language are equally accepting, and step to configurations with the same language; hence, $\psi$ is a symbolic bisimulation.
Since states related by $\sem{\phi}_\mathcal{L}$ agree on their languages, $\phi \vDash \psi$.
\end{proof}

\restatesymbolicrefinementcorrect*
\begin{proof}
Note that each iteration either the relation $\sem{\bigwedge R}_\mathcal{L}$ shrinks (if we enter the $\textbf{if}$-block), or $\sem{\bigwedge R}_\mathcal{L}$ stays the same but $T$ shrinks.
Thus, the main loop must terminate.

For partial correctness, we note that the loop maintains the following three loop invariants:
\begin{enumerate}
    \item
    If $c_1 \sem{\bigwedge R \wedge \bigwedge T}_\mathcal{L} c_2$, then for all $b \in \bits$ it holds that $\delta(c_1, b) \sem{\bigwedge R}_\mathcal{L} \delta(c_2, b)$.

    This claim holds trivially before the loop, because $R$ is empty.
    To see that it is preserved, let $T'$ and $R'$ be the new values of $T$ and $R$.
    There are two cases.
    \begin{itemize}
        \item
        If $\bigwedge R \vDash \psi$ then $R' = R$ and $T' \cup \{ \psi\} = T$.
        Thus,
        \begin{align}%
            {\sem{\bigwedge R' \wedge \bigwedge T'}_\mathcal{L}}
                &= {\sem{\bigwedge R \wedge \bigwedge T'}_\mathcal{L}} \tag*{} \\
                &= {\sem{\psi \wedge \bigwedge R \wedge \bigwedge T'}_\mathcal{L}} \tag*{} \\
                &= {\sem{\bigwedge R \wedge \bigwedge T}_\mathcal{L}}
            \label{equation:nochange}
        \end{align}
        Since $R = R'$ and the property was true before the loop, it also holds afterwards.

        \item
        Otherwise, $R' = R \cup \{ \psi \}$ and $T' = T \cup \WP(\psi)$.
        Suppose $c_1 \sem{\bigwedge R' \wedge \bigwedge T'}_\mathcal{L} c_2$; we should show that, for all $b \in \bits$, we have $\delta(c_1, b) \sem{\bigwedge R}_\mathcal{L} \delta(c_2, b)$ as well as $\delta(c_1, b) \sem{\psi}_\mathcal{L} \delta(c_2, b)$.
        Since $R' \cup T' \subseteq R \cup T$, we have in particular that $c_1 \sem{\bigwedge R \wedge \bigwedge T}_\mathcal{L} c_2$, and hence $\delta(c_1, b) \sem{\bigwedge R}_\mathcal{L} \delta(c_2, b)$, because the property is true before the loop.

        Furthermore, since $\WP(\psi) \subseteq R' \cup T'$, we also have $c_1 \sem{\chi}_\mathcal{L} c_2$ for all $\chi \in \WP(\psi)$.
        By the precondition about $\WP$, we conclude that $\delta(c_1, b) \sem{\psi}_\mathcal{L} \delta(c_2, b)$.
    \end{itemize}

    \item
    If $c_1 \sem{\bigwedge R \wedge \bigwedge T}_\mathcal{L} c_2$, then $c_1 \in F$ iff $c_2 \in F$.

    This property holds by construction before the loop.
    To see that it is preserved, note that $\sem{\bigwedge R \wedge \bigwedge T}_\mathcal{L}$ stays the same in the iterations where $\bigwedge R \vDash \psi$, and shrinks in all other iterations.
    Because the claim holds before each iteration, it must also hold afterwards.

    \item
    If $\rho$ is a symbolic bisimulation, then $\rho \vDash \bigwedge R \wedge \bigwedge T$.

    This claim holds before the loop, where $R = \emptyset$ and $T = I$.
    After all, if $c_1 \sem{\rho}_\mathcal{L} c_2$, then $c_1 \in F$ if and only if $c_2 \in F$, and hence $c_1 \sem{\bigwedge I}_\mathcal{L} c_2$.

    For preservation, let $T'$ and $R'$ be the new values of $T$ and $R$.
    There are again two cases.
    \begin{itemize}
        \item
        If $\bigwedge R \vDash \psi$, then by the derivation in~\eqref{equation:nochange}, we know that ${\sem{\bigwedge R' \wedge \bigwedge T'}_\mathcal{L}} = {\sem{\bigwedge R \wedge \bigwedge T}_\mathcal{L}}$.
        Since $\rho \vDash \bigwedge R \wedge \bigwedge T$, it then follows that $\rho \vDash \bigwedge R' \wedge \bigwedge T'$.

        \item
        Otherwise, we have that $R' \cup T' = R \cup T  \cup \WP(\psi)$.
        We should show that if $c_1 \sem{\rho}_\mathcal{L} c_2$, then $c_1 \sem{\chi}_\mathcal{L} c_2$ for all $\chi \in R' \cup T'$.
        Because the property holds before the loop, we already know that, under these circumstances, $c_1 \sem{\chi}_\mathcal{L} c_2$ for all $\rho \in R \cup T$; thus, it suffices to prove $c_1 \sem{\chi}_\mathcal{L} c_2$ for all $\chi \in \WP(\psi)$.
        By the precondition on $\WP$, it suffices to verify that for all $b \in \bits$ we have $\delta(c_1, b) \sem{\psi}_\mathcal{L} \delta(c_2, b)$.
        Now, since $\rho$ is a symbolic bisimulation, we already know that for all $b \in \bits$ we have $\delta(c_1, b) \sem{\rho}_\mathcal{L} \delta(c_2, b)$.
        Lastly, since the claim held before the loop, and $\psi \in T$, it then follows that $\delta(c_1, b) \sem{\psi}_\mathcal{L} \delta(c_2, b)$.
    \end{itemize}
\end{enumerate}
When the loop terminates, all of these conditions are true, and we also have that $T = \emptyset$.
The first two conditions then tell us that $\bigwedge R$ is a symbolic bisimulation; moreover, the second condition says that if $\rho$ is a symbolic bisimulation, then $\rho \vDash \bigwedge R$.
It then follows that $\bigwedge R$ is in fact the \emph{weakest} symbolic bisimulation.
This allows us to wrap up the correctness argument as follows:
\begin{itemize}
    \item
    If the algorithm returns $\True$, we know that $\phi$ entails a symbolic bisimulation, which by \cref{lemma:symbolic-bisimulation-sound-complete} tells us that $c_1 \sem{\phi}_\mathcal{L} c_2$ implies $L(c_1) = L(c_2)$.

    \item
    Conversely, if $c_1 \sem{\phi}_\mathcal{L} c_2$ implies $L(c_1) = L(c_2)$, then by \cref{lemma:symbolic-bisimulation-sound-complete} there exists a symbolic bisimulation $\psi$ such that $\phi \vDash \psi$.
    Since $\bigwedge R$ is the \emph{weakest} symbolic bisimulation, we also know that $\psi \vDash \bigwedge R$, and thus $\phi \vDash \bigwedge R$, meaning the algorithm returns $\True$.
    \qedhere
\end{itemize}
\end{proof}

\section{Proofs omitted from Section~\ref{sec:optimizing}}

\restatesymbolicrefinementinitializecorrect*
\begin{proof}
The same termination argument still applies.
The loop invariants are as follows.
\begin{enumerate}
    \item
    If $c_1 \sem{\bigwedge R \wedge \bigwedge T}_\mathcal{L} c_2$, then for all $b \in \bits$ it holds that $\delta(c_1, b) \sem{\bigwedge R}_\mathcal{L} \delta(c_2, b)$.

    The proof is completely analogous to the corresponding loop invariant in \cref{theorem:symbolic-refinement-correct}.

    \item
    If $c_1 \sem{\bigwedge R \wedge \bigwedge T}_\mathcal{L} c_2$ and $\floor{c_1} \mathrel{\reach_\phi} \floor{c_2}$, then $c_1 \in F$ if and only if $c_2 \in F$.

    Note that this invariant is slightly weaker than the corresponding invariant in the proof of \cref{theorem:symbolic-refinement-correct}.
    Here, it suffices to show that the property holds before the loop --- since $\sem{\bigwedge R \wedge \bigwedge T}_\mathcal{L}$ never grows inside the loop body, preservation is easy.
    Thus, suppose that $c_1 \sem{\bigwedge T}_\mathcal{L} c_2$ where $T$ is initialized to $I$ as given above, and also that $\floor{c_1} \mathrel{\reach_\phi} \floor{c_2}$.
    Assume towards a contradiction that $c_1 \in F$ if and only if $c_2 \not\in F$.
    In that case, $\floor{c_1} = t_\acc$ if and only if $\floor{c_2} \neq t_\acc$.
    Then, $\floor{c_1}^< \wedge \floor{c_2}^> \implies \bot \in T$, and thus $c_1 \sem{\bigwedge T}_\mathcal{L} c_2$ does \emph{not} hold --- a contradiction.
    Our assumption must have been wrong, and therefore $c_1 \in F$ iff $c_2 \in F$.

    \item
    If $\rho$ is a symbolic bisimulation, then $\rho \vDash \bigwedge R \wedge \bigwedge T$.
    In this case, it again suffices to show that this property holds before the loop.
    To this end, let $c_1 \sem{\rho}_\mathcal{L} c_2$.
    For all $t_1 \mathrel{\reach_\phi} t_2$ with $t_1 = t_\acc \iff t_2 \neq t_\acc$, we should argue that $c_1 \sem{t_1^< \wedge t_2^>}_\mathcal{L} c_2$ does \emph{not} hold.
    Thus, suppose towards a contradiction that $c_1 \sem{t_1^< \wedge t_2^>}_\mathcal{L} c_2$.
    Now, if $c_1 \in F$, then $t_1 = t_\acc$; but then $t_2 \neq t_\acc$, meaning that $c_2 \not\in F$.
    This contradicts that $\rho$ is a symbolic bisimulation with $c_1 \sem{\rho}_\mathcal{L} c_2$.
    Thus, $c_1 \sem{t_1^< \wedge t_2^>}_\mathcal{L} c_2$ does \emph{not} hold --- we are done.
\end{enumerate}
When the loop terminates, all three invariants still hold.
Specifically, (1)~$\sem{\bigwedge R}_\mathcal{L}$ is preserved by $\delta$, (2) if $c_1 \sem{\bigwedge R}_\mathcal{L} c_2$ and $\floor{c_1} \mathrel{\reach_\phi} \floor{c_2}$, then $c_1 \in F$ if and only if $c_2 \in F$, and (3)~if $\rho$ is a symbolic bisimulation, then $\rho \vDash \bigwedge R$.
\begin{itemize}
    \item
    Suppose the algorithm returns $\True$.
    In that case, $\phi \vDash \bigwedge R$.
    We choose $\psi = \bigvee\nolimits_{t_1 \mathrel{\reach_\phi} t_2} (t_1^< \wedge t_2^>)$, and claim that $\psi \wedge \bigwedge R$ is a symbolic bisimulation.
    Clearly, both $\sem{\psi}_\mathcal{L}$ and $\sem{\bigwedge R}_\mathcal{L}$ preserve $\delta$ --- the former by construction, the latter by the first loop invariant.

    To see that $\psi \wedge \bigwedge R$ is compatible with $F$, suppose $c_1 \sem{\psi \wedge \bigwedge R}_\mathcal{L} c_2$.
    In that case, we have $c_1 \sem{\psi}_\mathcal{L} c_2$, and so $\floor{c_2} \mathrel{\reach_\phi} \floor{c_2}$.
    Also, since $c_1 \sem{\bigwedge R}_\mathcal{L} c_2$, we know that $c_1 \in F$ if and only if $c_2 \in F$ by the second loop invariant.
    Thus, $\psi \wedge \bigwedge R$ is indeed a symbolic bisimulation.
    Since $\phi \vDash \psi \wedge \bigwedge R$, we can conclude that $c_1 \sem{\phi}_\mathcal{L} c_2$ implies $L(c_1) = L(c_2)$ by \cref{lemma:symbolic-bisimulation-sound-complete}.

    \item
    Suppose that $c_1 \sem{\phi}_\mathcal{L} c_2$ implies $L(c_1) = L(c_2)$.
    By \cref{lemma:symbolic-bisimulation-sound-complete} there exists some symbolic bisimulation $\psi$ such that $\phi \vDash \psi$.
    By the loop invariant, we know that $\psi \vDash \bigwedge R$.
    It then follows that $\phi \vDash \bigwedge R$, and thus the algorithm returns $\True$.
    \qedhere
\end{itemize}
\end{proof}

\restatebisimulationwithleapssoundcorrect*
\begin{proof}
The backward implication is straightforward.
In this case, we note that $\phi \vDash \psi$ for some symbolic bisimulation $\psi$.
Because any (symbolic) bisimulation is in particular a (symbolic) bisimulation with leaps, the claim follows.

For the other direction, suppose that $\psi$ is a symbolic bisimulation with leaps such that $\phi \vDash \psi$.
We define $R$ as the smallest relation satisfying
\[
    \inferrule{%
        c_1 \mathrel{\sem{\psi}_\mathcal{L}} c_2 \\
        w \in \bits^*
    }{%
        \delta^*(c_1, w) \mathrel{R} \delta^*(c_2, w)
    }
\]
Clearly, $R$ is closed under steps by construction.

We claim that $R$ is a bisimulation.
It suffices to prove that for all $c_1 \sem{\psi}_\mathcal{L} c_2$ and $w \in \bits$, we have $\delta^*(c_1, w) \in F$ if and only if $\delta^*(c_2, w) \in F$.
We proceed by induction on $|w|$.

In the base, $w = \epsilon$.
We can then conclude that $\delta^*(c_1, w) = c_1 \in F$ if and only if $\delta^*(c_2, w) = c_2 \in F$, because $c_1 \sem{\psi}_\mathcal{L} c_2$ and $\psi$ is a symbolic bisimulation with leaps.

For the inductive step, let $|w| > 0$ and $n = \sharp(c_1,c_2)$, and assume that the claim holds for all $y \in \bits$ with $|y| < |w|$.
On the one hand, if $|w| < n$, then necessarily $n > 1$, and thus $c_1, c_2 \not\in F$.
This means that $\delta^*(c_1, w), \delta^*(c_2, w) \not\in F$, because the state component of those configurations does not change in the first $n$ steps.
On the other hand, if $|w| \geq n$, then we write $w = xy$ with $|x| = n$.
Now, since $\sem{\psi}_\mathcal{L}$ is a symbolic bisimulation with leaps we know that $\delta^*(c_1, x) \sem{\psi}_\mathcal{L} \delta^*(c_2, x)$.
Finally, since $|y| < |w|$, we have
\begin{mathpar}
  \delta^*(c_1, w) = \delta^*(\delta^*(c_1, x), y) \in F \iff \\
  \delta^*(c_2, w) = \delta^*(\delta^*(c_2, x), y) \in F
\end{mathpar}
by induction.
\end{proof}

\begin{figure*}
\centering
\begin{tabular}{ccc}
\begin{subfigure}{0.3\textwidth}
\begin{lf-grey}
parse_0 {
  extract(T0, 8);
  extract(L0, 8);
  select(T0, L0) {
    (0x00, 0x00) => goto accept
    (0x01, 0x00) => goto accept
    (_, 0x01) => goto parse_v01
    (_, 0x02) => goto parse_v02
    (_, 0x03) => goto parse_v03
    (_, 0x04) => goto parse_v04
    (_, 0x05) => goto parse_v05
    (_, 0x06) => goto parse_v06
  }
}
parse_v01 {
  extract(scratch, 8);
  v0 <- scratch ++ v0[7:47]
  goto parse_1
}
parse_v02 {
  extract(scratch, 16);
  v0 <- scratch ++ v0[15:47]
  goto parse_1
}
parse_v03 {
  extract(scratch, 24);
  v0 <- scratch ++ v0[23:47]
  goto parse_1
}
parse_v04 {
  extract(scratch, 32);
  v0 <- scratch ++ v0[31:47]
  goto parse_1
}
parse_v05 {
  extract(scratch, 40);
  v0 <- scratch ++ v0[39:47]
  goto parse_1
}
parse_v06 {
  extract(v0, 48);
  goto parse_1
}
\end{lf-grey}
\end{subfigure}
&
\begin{subfigure}{0.3\textwidth}
\begin{lf-grey}
parse_1 {
  extract(T1, 8);
  extract(L1, 8);
  select(T1, L1) {
    (0x00, 0x00) => goto accept
    (0x01, 0x00) => goto accept
    (_, 0x01) => goto parse_v11
    (_, 0x02) => goto parse_v12
    (_, 0x03) => goto parse_v13
    (_, 0x04) => goto parse_v14
    (_, 0x05) => goto parse_v15
    (_, 0x06) => goto parse_v16
  }
}
parse_v11 {
  extract(scratch, 8);
  v1 <- scratch ++ v1[7:47]
  goto parse_2
}
parse_v12 {
  extract(scratch, 16);
  v1 <- scratch ++ v1[15:47]
  goto parse_2
}
parse_v13 {
  extract(scratch, 24);
  v1 <- scratch ++ v1[23:47]
  goto parse_2
}
parse_v14 {
  extract(scratch, 32);
  v1 <- scratch ++ v1[31:47]
  goto parse_2
}
parse_v15 {
  extract(scratch, 40);
  v1 <- scratch ++ v1[39:47]
  goto parse_2
}
parse_v16 {
  extract(v1, 48);
  goto parse_2
}
\end{lf-grey}
\end{subfigure}
&
\begin{subfigure}{0.3\textwidth}
\begin{lf-grey}
parse_2 {
  extract(T2, 8);
  extract(L2, 8);
  select(T2, L2) {
    (0x00, 0x00) => goto accept
    (0x01, 0x00) => goto accept
    (_, 0x01) => goto parse_v21
    (_, 0x02) => goto parse_v22
    (_, 0x03) => goto parse_v23
    (_, 0x04) => goto parse_v24
    (_, 0x05) => goto parse_v25
    (_, 0x06) => goto parse_v26
  }
}
parse_v21 {
  extract(scratch, 8);
  v2 <- scratch ++ v2[7:47]
  goto accept
}
parse_v22 {
  extract(scratch, 16);
  v2 <- scratch ++ v2[15:47]
  goto accept
}
parse_v23 {
  extract(scratch, 24);
  v2 <- scratch ++ v2[23:47]
  goto accept
}
parse_v24 {
  extract(scratch, 32);
  v2 <- scratch ++ v2[31:47]
  goto accept
}
parse_v25 {
  extract(scratch, 40);
  v2 <- scratch ++ v2[39:47]
  goto accept
}
parse_v26 {
  extract(v2, 48);
  goto accept
}
\end{lf-grey}
\end{subfigure}
\end{tabular}
\caption{Generic IP options parser.}
\label{fig:self-comp3-1}
\end{figure*}

\begin{figure*}
\centering
\begin{tabular}{ccc}
\begin{subfigure}{0.3\textwidth}
\begin{lf-grey}
parse_0 {
  extract(T0, 8);
  extract(L0, 8);
  select(T0, L0) {
    (0x00, 0x00) => goto accept
    (0x01, 0x00) => goto accept
    (0x44, 0x06) => goto parse_stamp0
    (_, 0x01) => goto parse_v01
    (_, 0x02) => goto parse_v02
    (_, 0x03) => goto parse_v03
    (_, 0x04) => goto parse_v04
    (_, 0x05) => goto parse_v05
    (_, 0x06) => goto parse_v06
  }
}
parse_stamp0 {
  extract(ptr0, 8);
  extract(over0, 4);
  extract(flag0, 4);
  extract(time0, 32);
  goto parse_1
}
parse_v01 {
  extract(scratch, 8);
  v0 <- scratch ++ v0[7:47]
  goto parse_1
}
parse_v02 {
  extract(scratch, 16);
  v0 <- scratch ++ v0[15:47]
  goto parse_1
}
parse_v03 {
  extract(scratch, 24);
  v0 <- scratch ++ v0[23:47]
  goto parse_1
}
parse_v04 {
  extract(scratch, 32);
  v0 <- scratch ++ v0[31:47]
  goto parse_1
}
parse_v05 {
  extract(scratch, 40);
  v0 <- scratch ++ v0[39:47]
  goto parse_1
}
parse_v06 {
  extract(v0, 48);
  goto parse_1
}
\end{lf-grey}
\end{subfigure}
&
\begin{subfigure}{0.3\textwidth}
\begin{lf-grey}
parse_1 {
  extract(T1, 8);
  extract(L1, 8);
  select(T1, L1) {
    (0x00, 0x00) => goto accept
    (0x01, 0x00) => goto accept
    (0x44, 0x06) => goto parse_stamp1
    (_, 0x01) => goto parse_v11
    (_, 0x02) => goto parse_v12
    (_, 0x03) => goto parse_v13
    (_, 0x04) => goto parse_v14
    (_, 0x05) => goto parse_v15
    (_, 0x06) => goto parse_v16
  }
}
parse_stamp1 {
  extract(ptr1, 8);
  extract(over1, 4);
  extract(flag1, 4);
  extract(time1, 32);
  goto parse_2
}
parse_v11 {
  extract(scratch, 8);
  v1 <- scratch ++ v1[7:47]
  goto parse_2
}
parse_v12 {
  extract(scratch, 16);
  v1 <- scratch ++ v1[15:47]
  goto parse_2
}
parse_v13 {
  extract(scratch, 24);
  v1 <- scratch ++ v1[23:47]
  goto parse_2
}
parse_v14 {
  extract(scratch, 32);
  v1 <- scratch ++ v1[31:47]
  goto parse_2
}
parse_v15 {
  extract(scratch, 40);
  v1 <- scratch ++ v1[39:47]
  goto parse_2
}
parse_v16 {
  extract(v1, 48);
  goto parse_2
}
\end{lf-grey}
\end{subfigure}&
\begin{subfigure}{0.3\textwidth}
\begin{lf-grey}
parse_2 {
  extract(T2, 8);
  extract(L2, 8);
  select(T2, L2) {
    (0x00, 0x00) => goto accept
    (0x01, 0x00) => goto accept
    (0x44, 0x06) => goto parse_stamp2
    (_, 0x01) => goto parse_v21
    (_, 0x02) => goto parse_v22
    (_, 0x03) => goto parse_v23
    (_, 0x04) => goto parse_v24
    (_, 0x05) => goto parse_v25
    (_, 0x06) => goto parse_v26
  }
}
parse_stamp2 {
  extract(ptr2, 8);
  extract(over2, 4);
  extract(flag2, 4);
  extract(time2, 32);
  goto accept
}
parse_v21 {
  extract(scratch, 8);
  v2 <- scratch ++ v2[7:47]
  goto accept
}
parse_v22 {
  extract(scratch, 16);
  v2 <- scratch ++ v2[15:47]
  goto accept
}
parse_v23 {
  extract(scratch, 24);
  v2 <- scratch ++ v2[23:47]
  goto accept
}
parse_v24 {
  extract(scratch, 32);
  v2 <- scratch ++ v2[31:47]
  goto accept
}
parse_v25 {
  extract(scratch, 40);
  v2 <- scratch ++ v2[39:47]
  goto accept
}
parse_v26 {
  extract(v2, 48);
  goto accept
}
\end{lf-grey}
\end{subfigure}
\end{tabular}
\caption{Specialized IP options Timestamp parser.}%
\label{fig:self-comp3-2}
\end{figure*}

\end{document}